\documentclass[11pt]{article}
\usepackage[top=1in, bottom=1in, left=1in, right=1in]{geometry}
\usepackage{amssymb, amsmath, amsthm, xspace, url, hyperref, enumitem, txfonts}
\usepackage{cleveref}

\newtheoremstyle{newthm}
  {3pt} 
  {3pt} 
  {\itshape} 
  {} 
  {\bfseries} 
  {.} 
  {.5em} 
  {} 
\theoremstyle{newthm}   

\newtheorem{theorem}{Theorem}
\newtheorem*{theorem*}{Theorem}

\newtheorem{lemma}[theorem]{Lemma}
\newtheorem{corollary}[theorem]{Corollary}

\newtheorem{fact}[theorem]{Fact}{\bfseries}{\itshape}

\newtheorem{problem}{Problem}
\newtheorem*{problem*}{Problem}

\theoremstyle{definition}
\newtheorem{definition}[theorem]{Definition}

\theoremstyle{remark}

\newtheorem{example}{Example}

\DeclareMathOperator{\pre}{pre}
\DeclareMathOperator{\cost}{cost}
\DeclareMathOperator{\rk}{rank}
\DeclareMathOperator{\dom}{dom}
\DeclareMathOperator{\IC}{IC}
\DeclareMathOperator{\polylog}{polylog}
\DeclareMathOperator{\PRIV}{PRIV}

\renewcommand{\epsilon}{\varepsilon}
\newcommand{\pub}{\textrm{pub}}
\newcommand{\R}{\operatorname{R}}
\newcommand{\Rs}{\R^s}
\newcommand{\Rpubd}{\R_{\delta}}
\newcommand{\Rspubd}{\R^s_{\delta}}
\newcommand{\D}{\operatorname{D}}

\newcommand{\cond}{\;\ifnum\currentgrouptype=16 \middle\fi|\;}
\newcommand{\rankp}{{\rk}}
\newcommand{\size}{{\operatorname{size}}}

\newcommand{\tr}{{\operatorname{tr}}}
\newcommand{\diag}{{\operatorname{diag}}}

\newcommand{\ang}[1]{{\langle{#1}\rangle}}

\newcommand{\etal}{{\it et al.}\xspace}

\newcommand{\trans}{^\mathrm{T}}

\newcommand{\scrank}{\textsc{Rank}}
\newcommand{\scinv}{\textsc{Inverse}}
\newcommand{\scip}{\textsc{IP}}
\newcommand{\scham}{\textsc{Ham}}

\newcommand{\scsls}{\textsc{LinSolve}}

\newcommand{\F}{\ensuremath\mathbb{F}}
\newcommand{\cP}{\ensuremath\mathcal{P}}
\newcommand{\bx}{\mathbf{x}}

\title{%
  On The Communication Complexity of Linear Algebraic Problems
  in the Message Passing Model
}

\author{
Yi Li\thanks{%
  Max-Planck Institute for Informatics.
  \texttt{yli@mpi-inf.mpg.de}.}
\and
Xiaoming Sun\thanks{%
  Institute of Computing Technology, Chinese Academy of Sciences.
  \texttt{sunxiaoming@ict.ac.cn}.}
\and
Chengu Wang\thanks{%
  Google Inc.
  \texttt{wangchengu@gmail.com}.}
\and
David P. Woodruff\thanks{%
  IBM Almaden Research Center.
  \texttt{dpwoodru@us.ibm.com}.}
}

\setlist{topsep=2pt, itemsep=0pt, parsep=0pt, partopsep=0pt}

\begin{document}

\date{}
\maketitle
\thispagestyle{empty}

\begin{abstract}

We study the communication complexity of linear algebraic
problems over finite fields in the multi-player message passing model, proving a number of
tight lower bounds. Specifically, for a matrix which is distributed among a number
of players, we consider the problem of determining
its rank, of computing entries in its inverse, and 
of solving linear equations. We also consider related problems such as 
computing the generalized inner product of vectors held on different
servers. We give a general framework for reducing these
multi-player problems to their two-player counterparts, showing that the randomized
$s$-player communication complexity of these problems is at least $s$ times the 
randomized two-player communication complexity. Provided the problem has a
certain amount of algebraic symmetry, which we formally define, we can 
show the hardest input distribution is a symmetric distribution, 
and therefore apply a recent multi-player lower bound technique
of Phillips \etal Further, we give new 
two-player lower bounds for a number of these problems. In particular, our
optimal lower bound for the two-player version of the matrix rank problem
resolves an open question of Sun and Wang.

A common feature of our lower bounds is that they apply even to the 
special ``threshold promise'' versions of these problems, wherein the
underlying quantity, e.g., rank, is
promised to be one of just two values, one on each side of some critical
threshold. These kinds of promise problems are commonplace in the
literature on data streaming as sources of hardness for reductions
giving space lower bounds.


\end{abstract}

\newpage

\addtocounter{page}{-1}

\section{Introduction} \label{sec:intro}

{\bf Communication complexity}, introduced in the celebrated work of
Yao~\cite{Yao79}, is a powerful abstraction that captures the essence of
a host of problems in areas as disparate as data structures, decision
trees, data streams, VLSI design, and circuit
complexity~\cite{kushilevitz1997communication}. It is concerned with
problems (or {\em games}) where an input is distributed among $s \ge 2$
players who must jointly compute a function $f:X_1\times\cdots\times
X_s\to Z$, each $X_i$ and $Z$ being a finite set: Player $i$ receives an
input $x_i\in X_i$, the players then communicate by {\em passing
messages} to one another using a predetermined protocol $\cP$, and
finally they converge on a shared output $\cP(x_1,\ldots,x_s)$. 
The main goal of the players is to minimize the
amount of communication, i.e., the total length of messages
communicated.  Put $\bx = (x_1,\ldots,x_s)$. We say that a deterministic
protocol $\cP$ computes $f$ if $\cP(\bx)=f(\bx)$ for all inputs $\bx$.
In a randomized protocol, the players can flip coins and send messages
dependent on the outcomes; we shall focus on the public coin variant,
wherein the coin flip outcomes are known to all players.%
\footnote{Though the private coin model may appear more ``natural,'' our
key results, being lower bounds, are {\em stronger} for holding in the
more general public coin model. In any case, for the particular problems
we consider here, the private and public coin models are asymptotically
equivalent by a theorem of Newman~\cite{Newman91}.} We say a randomized
protocol $\cP$ computes $f$ with error $\delta$ if $\Pr[\cP(\bx) =
f(\bx)] \ge 1-\delta$ for all inputs $\bx$. In all cases, we define the
{\em cost} of $\cP$ to be the maximum number of bits communicated by
$\cP$ over all inputs. We define the {\em deterministic}
(resp.~$\delta$-error {\em randomized}) communication complexity of $f$,
denoted $\D(f)$ (resp.~$\Rpubd(f)$) to be the minimum cost of a protocol
that computes $f$ (with error $\delta$ in the randomized case).
It holds that $\D(f)\ge \Rpubd(f)$ for all $f$ and $0\le\delta\le 1$.

Most work in communication complexity has focused on the two-player
model (the players are named Alice and Bob in this case), which already
admits a deep theory with many applications. However, one especially
important class of applications is {\em data stream}
computation~\cite{raghavan1999computing,munro1980selection}: the input
is a very long sequence that must be read in a few {\em streaming
passes}, and the goal is to compute some function of the input while
minimizing the memory (storage space) used by the algorithm. Several
data stream lower bounds specifically call for {\em multi-player}
communication lower bounds~\cite{DBLP:journals/jcss/AlonMS99}. Moreover,
several newer works have considered distributed computing problems with
streamed inputs, such as the {\em distributed functional monitoring}
problems of Cormode \etal~\cite{CormodeMY08}: in a typical scenario, a
number of ``sensors'' must collectively monitor some state of their
environment by efficiently communicating with a central ``coordinator.''
Studying the complexity of problems in such models naturally leads one
to questions about multi-player communication protocols.

In the multi-player setting, strong lower bounds in the message passing
model\footnote{In contrast to the message passing model is the {\em
blackboard} model, where players write messages on a shared blackboard.}
are a fairly recent achievement, even for basic problems. For the {\sc
SetDisjointness} problem, a cornerstone of communication complexity
theory, two-player lower bounds were long
known~\cite{journals/siamdm/KalyanasundaramS92,razborov1992distributional}
but an optimal multi-player lower bound was only obtained in the very
recent work of Braverman \etal~\cite{BEOPV13}. For computing bit-wise AND,
OR, and XOR functions of vectors held by different parties, as well as
other problems such as testing connectivity and computing coresets for
approximating the geometric width of a pointset, optimal lower bounds 
were given in \cite{PVZ12}. For computing a number of graph
properties or exact statistics of databases, a recent work achieved optimal
lower bounds \cite{wz13}. There are also recent
tight lower bounds for approximating frequency moments \cite{wz12} and 
approximating distinct elements \cite{wz14}. Our chief motivation is
to further develop this growing theory, giving optimal lower bounds for
other fundamental problems.

{\bf Linear algebra} is a fundamental area in pure and applied
mathematics, appearing ubiquitously in computational applications.  The
communication complexity of linear algebraic problems is therefore
intrinsically interesting. The connection with data streaming adds
further motivation, since linear algebraic problems are a major focus of
data stream computation.  Frieze, Kannan and
Vempala~\cite{frieze1998fast} developed a fast algorithm for the
low-rank approximation problem. Clarkson and
Woodruff~\cite{DBLP:conf/stoc/ClarksonW09} gave near-optimal space
bounds in the streaming model for many linear algebra problems, e.g.,
matrix multiplication, linear regression and low rank approximation.
Muthukrishnan~\cite{muthukrishnan2005data} asked several linear algebra
questions in the streaming model including rank-$k$ approximation,
matrix multiplication, matrix inverse, determinant, and eigenvalues.
S\'arlos~\cite{sarlos2006improved} gave upper bounds for many
approximation problems, including matrix multiplication, singular value
decomposition and linear regression. 

\paragraph{Our Results:}  Let us first describe the new two-player communication complexity
results proved in this work. We then describe how to extend these to obtain our multi-player
results. 
\\\\
{\it Two-Player Lower Bounds:}
We start by studying the following closely related matrix
problems. In each case, the input describes a matrix $z \in M_n(\F_p)$,
the set of $n\times n$ matrices with entries in the finite field $\F_p$
for some prime $p$. 
\begin{itemize}
  \item Problem $\scrank_{n,k}$: Under the promise that $\rankp(z) \in
  \{k,k+1\}$, compute $\rankp(z)$.
  \item Problem $\scinv_n$: Under the promise that $z$ is invertible,
  decide whether the $(1,1)$ entry of $z^{-1}$ is zero.
  \item Problem $\scsls_{n,b}$: Under the promise that $z$ is
  invertible, for a fixed non-zero vector $b\in \mathbb{F}_p^n$,
  consider the linear system $zt=b$ in the unknowns $t\in \mathbb{F}_p^n$. 
  Decide whether $t_1$ is zero.
\end{itemize}
There are two natural ways to split $z$ between Alice and Bob. In the
concatenation model, Alice and Bob hold the top $n/2$ rows and bottom
$n/2$ rows of $z$, respectively. In the additive split model, Alice and
Bob hold $x,y\in M_n(\F_p)$ respectively, and $z = x + y$. The two
models are equivalent up to a constant factor~\cite{sun2012randomized},
see \Cref{sec:la}. All of this generalizes in the obvious manner to the
multi-player setting.

\begin{theorem} \label{thm:rank-lb}
  Let $f$ be one of $\scrank_{n,n-1}$, $\scinv_n$, or $\scsls_{n,b}$.
  Then $\R_{1/10}(f) = \Omega(n^2\log p)$.
\end{theorem}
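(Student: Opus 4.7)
The plan is to handle all three problems by reducing them to a common core---the additive-split rank-gap problem $\scrank_{m,m-1}$ on $m\times m$ matrices over $\F_p$ with $m=\Theta(n)$---and then to prove the tight $\Omega(m^2\log p)$ randomized communication lower bound for this core.

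For the reductions I would use standard linear-algebraic identities. $\scrank_{n,n-1}$ is already the core, after applying the equivalence of the concatenation and additive-split models (\Cref{sec:la}). For $\scinv_n$, the cofactor formula $(z^{-1})_{1,1}=\det(z^{[\hat 1,\hat 1]})/\det(z)$ together with the invertibility promise on $z$ imply $(z^{-1})_{1,1}=0$ iff the $(n-1)\times(n-1)$ minor $z^{[\hat 1,\hat 1]}$ is singular; a single identity row/column padding converts this into a rank-gap test on an $n\times n$ matrix, preserving the input split. For $\scsls_{n,b}$, Cramer's rule $t_1=\det(z_{1\leftarrow b})/\det(z)$ combined with the invertibility of $z$ gives $t_1=0$ iff the matrix $z_{1\leftarrow b}$ (obtained by replacing the first column of $z$ with $b$) is singular; the split carries over directly.

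For the core lower bound, the key observation is that the communication matrix of $\scrank_{m,m-1}$ in the additive-split model has the group-XOR form $M_{A,B}=s(A+B)$, where $s$ is the singularity indicator on the abelian group $(M_m(\F_p),+)$. I would exploit this via Fourier analysis on $(M_m(\F_p),+)$, computing the Fourier coefficients $\hat s(Y)=p^{-m^2}\sum_{X\text{ singular}}\omega^{-\tr(Y^\top X)}$ with $\omega=e^{2\pi i/p}$. By the $\mathrm{GL}_m(\F_p)\times\mathrm{GL}_m(\F_p)$-invariance of the singularity property, $\hat s(Y)$ depends only on $\rk(Y)$, and a character-sum / M\"obius-inversion calculation over the rank filtration yields a closed form showing $\hat s_r\neq 0$ for every $r\in\{0,\dots,m\}$, together with quantitative magnitude bounds. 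Since Fourier sparsity equals the rank of the communication matrix for XOR-type functions, this already gives the deterministic bound $\D(\scrank_{m,m-1})\geq m^2\log p-O(\log m)$ via the log-rank method.

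The main obstacle is lifting this exact estimate to the randomized regime at constant error. I plan to use the magnitudes of $\hat s_r$ to bound the approximate $\gamma_2$-norm / approximate Fourier $\ell_1$-norm of $s$ from below by $p^{\Omega(m^2)}$, which via the Linial--Shraibman machinery yields $\R(\scrank_{m,m-1})=\Omega(m^2\log p)$. If the approximate Fourier bookkeeping proves too delicate, a fallback is the Schur-complement block construction $\bigl(\begin{smallmatrix}A&I\\I&B\end{smallmatrix}\bigr)$, whose rank equals $m+\rk(A-B)$; this reduces the rank-gap test on $(2m)\times(2m)$ matrices to the simpler question of whether a generic $m\times m$ matrix $A-B$ is singular, to which direct information-complexity or corruption-bound arguments on the product-uniform distribution of $(A,B)$ can be applied to extract $\Omega(m^2\log p)$.
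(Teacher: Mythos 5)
Your plan has the right core technique in spirit (Fourier analysis of rank-indicator functions plus approximate-norm machinery, which is exactly the paper's Fourier witness method), but there are two genuine gaps. First, the reductions for $\scinv_n$ and $\scsls_{n,b}$ are stated in the wrong direction and, more importantly, break the promise. To lower bound $\scinv_n$ you must map a $\scrank$ instance to an $\scinv$ instance, and identity padding does not do this: if the $(n-1)\times(n-1)$ matrix under test has rank $n-2$, the identity-padded $n\times n$ matrix has rank $n-1$, so it is \emph{not} invertible, the $\scinv$ promise is violated, and the protocol may answer arbitrarily---precisely in the case you need to detect. The paper instead appends a uniformly random row and column (so the promise holds except with probability $O(1/p)$) and then does a careful accounting of acceptance probabilities (a gap $\ge 1/18$ for $p\ge 3$, with a separate argument at $p=2$ that further randomizes by random invertible left/right multipliers), followed by constant-factor amplification. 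Similarly, ``the split carries over directly'' via Cramer's rule is not a reduction from a hard problem; the paper reduces $\scinv$ to $\scsls_{n,b}$ by a fixed invertible $Q$ with $Qb=e_1$, which gives $t_1=((x+y)^{-1})_{11}$ exactly.

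Second, the core lower bound you sketch is for the \emph{total} singularity indicator $s$, but the theorem needs the \emph{partial} (promise) function with rank in $\{n,n-1\}$ --- this is precisely the open question of Sun and Wang that the paper resolves, and the lower bound for the total function (or its exact Fourier sparsity / log-rank consequence) does not transfer: the approximating function is unconstrained off the promise set, so knowing the magnitudes of $\hat{s}$ does not by itself lower bound the approximate Fourier $\ell_1$ or $\gamma_2$ norm of the partial function. The paper handles this via the dual characterization of approximate norms for partial functions, choosing the witness $\psi=(-1)^n\hat{\theta}$ (the Fourier transform of the full-rank indicator $\theta$) and verifying by explicit computation that the dual value stays positive at $\epsilon=1/4$; this witness step is exactly what your plan leaves open. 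Your fallback does not rescue it: the identity $\rk\bigl(\begin{smallmatrix}A&I\\I&B\end{smallmatrix}\bigr)=m+\rk(A-B)$ is false (the rank is $m+\rk(I-BA)$), and invoking ``direct information-complexity or corruption-bound arguments'' for singularity over $\F_p$ asserts, without proof, essentially the bound being established.
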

The above immediately implies $\Omega(n^2 \log p)$ space lower bounds
for randomized streaming algorithms for each of these problems, where
the input matrix $z$ is presented in row-major order. See Appendix
\ref{sec:stream} for details. Clearly these
lower bounds are optimal, since the problems have trivial $O(n^2\log p)$
upper bounds, that being the size of the input.
We remark that \Cref{thm:rank-lb} in fact extends to the {\em quantum}
communication model, a generalization of randomized communication that
we shall not elaborate on in this paper. 

To prove these lower bounds, we use the \emph{Fourier witness method}~\cite{sun2012randomized} 
for the promised rank problem, then reduce it to other problems. The reduction to the other problems
critically uses the promise in the rank problem, for which establishing a lower bound was posed as
an open question in \cite{sun2012randomized}. 
Roughly speaking, the Fourier witness method is a special type of dual norm method~\cite{linial2009lower,conf/stoc/Sherstov08,shi2007quantum}.
In the dual norm method, there is a \emph{witness} (a feasible solution of the dual maximization problem for the approximate norms). 
A typical choice of witness is the function itself (such as in the discrepancy method). 
In the \emph{Fourier witness method} the witness is chosen as the Fourier transform of the function. 
This method works well for plus composed functions. For details, see Section~\ref{sec:fourierwitness}.

We also consider the inner product and Hamming weight problems.
Alice and Bob now hold {\em vectors} $x$ and $y$.
\begin{itemize}
  \item Problem $\scip_{n}$: Under the promise that
  $\ang{x,y}\in\{0,1\}$, compute $\ang{x,y}$. Here $x,y\in\F_p^n$.
   \item Problem $\scham_{n,k}$: Under the promise that
   $\|x+y\|\in\{k,k+2\}$, compute $\|x+y\|$. Here $x,y\in\F_2^n$ and
   $\|z\|$ denotes the Hamming weight of $z$, i.e., the number of $1$
   entries in $z$. Note that $x-y = x+y$.
\end{itemize}
We do not provide new two-player lower bounds for $\scip_{n}$ and $\scham_{n,k}$, but state the known ones 
here for use in our $s$-player lower bounds. 
It is known that $R_{1/3}(\scip_{n}) = \Omega(n \log p)$ \cite{SWY12}, and
$\R_{1/3}(\scham_{n,k}) = \Omega(k)$ \cite{hszz06}. 
%
%
%
\\\\
{\it $s$-Player Lower Bounds:}
For each of the above problems, there are natural $s$-player variants. We
consider the {\it coordinator model} in which there is an additional
player, called the coordinator, who has no input. We require that
the $s$ players can only talk to the coordinator. The message-passing
model can be simulated in the coordinator model since every time a Player 
$i$ wants to talk to a Player $j$, Player $i$ can first send a message to the
coordinator, and then the coordinator can forward the message to 
Player $j$. This only affects the communication by a factor of $2$. See, e.g.,
Section 3 of \cite{BEOPV13} for a more detailed description. 

For the matrix problems, Player $i$ holds a matrix $x^{(i)}$ and the
computations need to be performed on $z = x^{(1)} + \cdots + x^{(s)}$.
The Hamming weight problem is similar, except that each $x^{(i)}$ is a
vector in $\F_2^n$. For the inner product problem, each
$x^{(i)}\in\F_p^n$ and we consider the generalized inner product,
defined as $\sum_{j=1}^n \prod_{i=1}^s x^{(i)}_j$.

We provide a framework for applying the recent {\em symmetrization}
technique of Phillips \etal~\cite{PVZ12} to each of these problems.
Doing so lets us ``scale up'' each of the above lower bounds to the
$s$-player versions of the problems. 

However, the symmetrization technique
of Phillips \etal does not immediately apply, since it requires a
lower bound on the {\it distributional} communication complexity of the two-player
problem under an input distribution with certain symmetric properties.
Nevertheless, for many of the two-player lower bounds above, e.g., those in
Theorem \ref{thm:rank-lb}, our lower bound technique does not give
a distributional complexity lower bound. We instead exploit the symmetry
of the underlying problems, together with a re-randomization argument
in Theorem \ref{thm:hardest_dist} to argue that the hardest input
distribution to these problems is in fact a symmetric distribution; 
see Definition \ref{def:symmetric} for a precise definition of symmetric. We thus 
obtain a distributional lower bound by the strong version of Yao's minimax
principle. 

We obtain the following
results. Here, $\Rspubd(f)$ denotes the $\delta$-error randomized
communication complexity of the $s$-player variant of $f$. We give
precise definitions in \Cref{sec:multiparty}.

\begin{theorem} \label{thm:multiplayer}
  If $f$ is one of $\scrank_{n,n-1}$, $\scinv_n$, or $\scsls_{n,b}$,
  then $\Rs_{1/40}(f) = \Omega(sn^2\log p)$. Further,
  $\Rs_{1/12}(\scip_{n}) = \Omega(sn\log p)$ and $\Rs_{1/12}(\scham_{n,k}) = \Omega(sk)$.
\end{theorem}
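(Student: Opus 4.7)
The plan is to apply the \emph{symmetrization} technique of Phillips, Verbin, and Zhang to each of the five problems, reducing the $s$-player lower bound to a two-player \emph{distributional} lower bound under a symmetric input distribution. In general, symmetrization shows that if $\mu$ is a symmetric hard distribution for the two-player problem, then the $s$-player complexity is at least $\Omega(s)$ times the two-player distributional complexity under $\mu$; combined with a two-player distributional bound of the stated order, this yields the multi-player bound.

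For $\scip_n$ and $\scham_{n,k}$, the cited lower bounds of SWY12 and HSZZ06 are already established under symmetric distributions (the natural product-distribution hard instances for inner product and for the Hamming-weight gap problem), so no additional work is needed. For $\scrank_{n,n-1}$, $\scinv_n$, and $\scsls_{n,b}$, \Cref{thm:rank-lb} is proved via the Fourier witness method and yields only a worst-case randomized lower bound. To upgrade this to a distributional bound under a symmetric distribution, I would invoke \Cref{thm:hardest_dist}: each of these problems is invariant under the action of the natural symmetry group of the additive split model (the involution $(x,y)\mapsto(y,x)$, together with rank- and invertibility-preserving automorphisms of $M_n(\F_p)$ that fix the distinguished $(1,1)$ slot or first coordinate when relevant), so \Cref{def:symmetric} is satisfied. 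The theorem then says the hardest input distribution may be taken symmetric, and the strong form of Yao's minimax principle produces a symmetric $\mu$ against which the distributional complexity is $\Omega(n^2\log p)$.

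With a symmetric hard distribution $\mu$ in hand, I would execute the PVZ reduction. Construct an $s$-player distribution $\nu$ by sampling $(x,y)\sim\mu$, choosing an ordered pair $(i,j)$ of distinct players uniformly at random, setting $x^{(i)}=x$ and $x^{(j)}=y$, and filling the remaining $x^{(\ell)}$ with \emph{neutral} inputs generated from public randomness (in the additive matrix model, uniformly random matrices constrained to sum to zero; for $\scip_n$, all-ones vectors so that $\prod_i x^{(i)}_j=x_j y_j$; for $\scham_{n,k}$, zero vectors). Since $\mu$ is symmetric, the marginals of $\nu$ are identical across all $s$ players. Given an $s$-player protocol $\Pi$ of cost $C$ and error at most $1/40$ (resp.~$1/12$), some player's expected communication is at most $C/s$; Alice adopts this player's role, while Bob simulates the coordinator and the remaining $s-1$ players using the shared randomness that generated the neutral padding. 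Exchanging only the messages that cross into Alice's slot, Alice and Bob obtain a two-player $\mu$-distributional protocol of expected cost $O(C/s)$. A Markov truncation converts this to a worst-case cost $O(C/s)$ protocol with error inflated by a factor of $4$ (from $1/40$ to $1/10$, resp.~from $1/12$ to $1/3$), matching the error regime in which our two-player distributional lower bound holds. Rearranging yields $C=\Omega(sn^2\log p)$, $\Omega(sn\log p)$, and $\Omega(sk)$, as claimed.

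The main obstacle I anticipate is verifying the algebraic-symmetry hypothesis of \Cref{thm:hardest_dist} for $\scinv_n$ and $\scsls_{n,b}$, since the distinguished $(1,1)$ entry of the inverse (respectively, the first coordinate of the solution $t$) shrinks the admissible automorphism group relative to the purely rank-based $\scrank_{n,n-1}$. A related technicality is choosing the neutral padding so that it preserves the promise on $z=\sum_\ell x^{(\ell)}$ (invertibility, the rank-$(n-1)$ vs.~rank-$n$ gap, or the Hamming weight gap $k$ vs.~$k+2$) with high enough probability that the distributional error bound survives the embedding; in the additive model one can accomplish this by restricting the neutral padding to matrices whose sum lies in a large rank-preserving subgroup of $\mathrm{GL}_n(\F_p)$, conditioning on a constant-probability event that the promise is met.
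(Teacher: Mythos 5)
Your high-level plan (symmetrize \`a la Phillips \etal, after using \Cref{thm:hardest_dist} to get hardness under a symmetric distribution) is the same as the paper's, but the execution has a genuine gap at the heart of the symmetrization step. You build the $s$-player distribution by planting $(x,y)\sim\mu$ at two random positions and padding the other $s-2$ players with \emph{neutral} inputs (all-ones vectors, zero vectors, or matrices summing to zero). With such a distribution, the cost accounting does not go through: the fact that ``some player's expected communication is at most $C/s$'' is a statement about the \emph{unconditional} expectation under your $\nu$, whereas the two-player simulation in which Alice plays that fixed player with her real input $x$ has expected cost equal to the expectation \emph{conditioned on that player holding a real input} --- an event of probability only $O(1/s)$ under $\nu$. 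A priori this conditional expectation can be as large as $C$, and ruling out protocols whose communication load correlates with the hidden ``real vs.\ padding'' roles is essentially the theorem you are trying to prove; for $\scip_n$ and $\scham_{n,k}$ the problem is blatant, since a player holding the all-ones (resp.\ zero) vector can recognize itself as padding with one bit. This is precisely why the paper does \emph{not} pad: its $s$-player hard distribution $\nu_s$ is the sub-uniform distribution on $s$-tuples whose product lies in $G_i$, so that \emph{every} player's input is ``real'' and exchangeable, Bob simulates a uniformly random player $j$ with his own input, Alice samples the other $s-1$ inputs conditioned on their product being $g_1$, and $j$ is independent of the input --- which is what legitimately bounds the embedded cost by $C/s$. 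Correspondingly, the two-player hardness must be established under the specific induced sub-uniform pair distribution; that is the actual content of \Cref{thm:hardest_dist} (whose hypothesis is the existence of a uniformizing family, e.g.\ $h_{a,b}(g_1,g_2)=(a(g_1-b),a(g_2+b))$ for $\scrank_{n,n-1}$), not merely ``the hardest distribution may be taken symmetric.'' Your shortcut for $\scip_n$ and $\scham_{n,k}$ (``the known bounds are already under symmetric distributions, so no work is needed'') glosses over the same point: the paper needs the group structure on $(\F_p^\ast)^n$, the reduction $\textsc{IP}''\to\textsc{IP}'$, and the uniformizing family for $\scham$, exactly to get hardness under the sub-uniform distribution that the symmetrization consumes.

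Two smaller points. For $\scinv_n$ and $\scsls_{n,b}$ you propose to verify the symmetry hypothesis directly, and you correctly flag this as problematic (the distinguished $(1,1)$ entry kills the natural uniformizing family); the intended route is instead to prove the multi-player bound for $\scrank_{n,n-1}$ via the framework and then lift the two-player reductions $\scrank\to\scinv\to\scsls$ to the $s$-player additive model, where they are implementable locally (each player multiplies its share by $Q^{-1}$; one player appends the random row/column while the others append zeros). Finally, the promise-preservation worry in your last paragraph is a non-issue for your zero-sum padding (it preserves $z=\sum_\ell x^{(\ell)}$ exactly), and the suggested fix via a ``rank-preserving subgroup of $\mathrm{GL}_n(\F_p)$'' does not parse in the additive model; the real obstacle is the distributional symmetry described above.
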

We note an application to the information-theoretic 
privacy of the $\textsc{rank}_{n, n-1}$ problem in Appendix \ref{sec:privacy}.

\paragraph{Related Work:}
Many linear algebra problems have been studied in both the communication complexity model and the streaming model.
Chu and Schnitger~\cite{journals/mst/ChuS95} proved that $\Omega(n^2\log p)$ communication is required by deterministic protocols for the singularity problem over $\mathbb{F}_p$. Luo and Tsitsiklis~\cite{journals/jacm/LuoT93_1} proved that a deterministic protocol must transfer $\Omega(n^2)$ real numbers for the matrix inversion problem over $\mathbb{C}$, but Alice and Bob can only use addition, subtraction, multiplication and division of real numbers. Clarkson and Woodruff~\cite{DBLP:conf/stoc/ClarksonW09} proposed a randomized one pass streaming algorithm that uses $O(k^2\log n)$ space to decide if the rank of an integer matrix is $k$ and proved an $\Omega(k^2)$ lower bound for randomized one-way protocols in the communication complexity model via a reduction from the \textsc{Indexing} communication problem. It implies an $\Omega(n^2)$ space lower bound in the streaming model with one pass. Miltersen et al.~\cite{journals/jcss/MiltersenNSW98} showed a tight lower bound for deciding whether a vector is in a subspace of $\mathbb{F}_2^n$ in the one-sided error randomized asymmetric communication complexity model, using the Richness Lemma. Sun and Wang~\cite{sun2012randomized} proved the quantum communication complexities for matrix singularity and determinant over $\mathbb{F}_p$ are both $\Omega(n^2\log p)$.

Compared to previous results, our results are stronger. For the rank problem, the matrix singularity problem in \cite{sun2012randomized} is to decide if the rank of a matrix is $n$ or less than $n$, but \scrank$_{n,n-1}$ is to decide if the rank is $n$ or $n-1$. This additional promise enables our lower bounds for $\scinv_n$ and $\scsls_{n,b}$. If we set $k=n$ in Clarkson and Woodruff's result~\cite{DBLP:conf/stoc/ClarksonW09}, the result gives us an $\Omega(n^2)$ bound for randomized one-way protocols. However, our lower bounds work even for quantum two-way protocols. For the inverse problem, Luo and Tsitsiklis's result~\cite{journals/jacm/LuoT93_1} is in a non-standard communication complexity model, in which Alice and Bob can only make arithmetic operations on real numbers. However, our lower bound works in the standard model of communication complexity. A result of Miltersen et al.~\cite{journals/jcss/MiltersenNSW98} is to decide if a vector is in a subspace. Sun and Wang~\cite{sun2012randomized} studied the problem deciding whether two $n/2$ dimensional subspaces intersect trivially (at $\{\mathbf{0}\}$ only) or not, but we get the same bound in Corollary~\ref{cor:subspace} even with the promise. The results are analogous to the difference between set disjointness~\cite{conf/focs/BabaiFS86} and unique set disjointness~\cite{journals/siamdm/KalyanasundaramS92,razborov1992distributional}.

\begin{corollary}\label{cor:subspace}
    Alice and Bob each hold an $n/2$-dimensional subspace of $\mathbb{F}_p^n$. We promise that the intersection of the two subspaces is either $\{\mathbf{0}\}$ or a one-dimensional space. Any quantum protocol requires $\Omega(n^2 \log p)$ communication to distinguish the two cases.
\end{corollary}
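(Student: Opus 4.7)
\textbf{Proof plan for \Cref{cor:subspace}.} The plan is to reduce $\scrank_{n,n-1}$ in the concatenation model to the promise subspace intersection problem, so that the $\Omega(n^2\log p)$ two-party lower bound of \Cref{thm:rank-lb}, which extends to quantum protocols as noted in the paper, transfers.

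Given an input $z\in M_n(\F_p)$ with $\rankp(z)\in\{n-1,n\}$, write $z=\begin{pmatrix} A \\ B \end{pmatrix}$ with Alice holding the top $(n/2)\times n$ block $A$ and Bob holding the bottom block $B$. The key identity is
$$\ker z \;=\; \ker A \cap \ker B,$$
so $\dim(\ker A\cap\ker B)=n-\rankp(z)\in\{0,1\}$. When both $A$ and $B$ have full row rank $n/2$, the subspaces $U:=\ker A$ and $V:=\ker B$ are $n/2$-dimensional subspaces of $\F_p^n$ satisfying the promise of the subspace intersection problem, and the outcome tells us $\rankp(z)$: intersection $\{\mathbf{0}\}$ means $\rankp(z)=n$, while a one-dimensional intersection means $\rankp(z)=n-1$.

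To cope with the case where one block has deficient row rank, Alice and Bob first exchange the locally computable values $r_A:=\rankp(A)$ and $r_B:=\rankp(B)$ using $O(\log n)$ bits. Since $r_A,r_B\le n/2$ and $\rankp(z)\le r_A+r_B$, whenever $r_A<n/2$ or $r_B<n/2$ we have $\rankp(z)\le n-1$, and the promise forces $\rankp(z)=n-1$, so they halt with this answer. Otherwise $r_A=r_B=n/2$, Alice computes a basis of $\ker A$ and Bob a basis of $\ker B$ locally, and they invoke the assumed subspace intersection protocol on $(\ker A,\ker B)$.

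A quantum protocol of cost $T$ for the subspace intersection problem thus yields a quantum protocol for $\scrank_{n,n-1}$ of cost $T+O(\log n)$ with the same error probability, and \Cref{thm:rank-lb} in its quantum form gives $T+O(\log n)=\Omega(n^2\log p)$, hence $T=\Omega(n^2\log p)$. The only delicate point I foresee is ensuring the reduction always produces a legitimate instance of the promise subspace intersection problem---both subspaces of dimension exactly $n/2$, intersection of dimension in $\{0,1\}$---and this is precisely what the rank-exchange step achieves; the $O(\log n)$ classical overhead is compatible with any quantum protocol and is absorbed into the $\Omega(n^2\log p)$ bound.
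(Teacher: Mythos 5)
Your reduction is correct and is essentially the argument the paper intends for this corollary (which it states without writing out a proof): pass from the concatenation-model instance of $\scrank_{n,n-1}$ to the two players' kernels (equivalently, row spaces), use the promise to dispose of the rank-deficient case, and invoke the quantum form of \Cref{thm:rank-lb} together with the equivalence of the additive-split and concatenation models. The $O(\log n)$ rank-exchange step and the choice of kernels versus row spaces are immaterial details, and nothing is missing.
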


In the communication model, there is another way to distribute the input: Alice and Bob each hold an $n\times n$ matrix $x$ and $y$, respectively, and they want to compute some property of $x+y$. This is equivalent to our model of matrix concatenation up to a constant factor, a fact we shall use in the paper (see~\cite{sun2012randomized} for a proof). 

\paragraph*{Paper Organization:} In Section~\ref{sec:multiparty} we present our framework of multi-party communication lower bound for a class of problems. In Section~\ref{sec:IP} we discuss the $\textsc{IP}_n$ problem and in Section~\ref{sec:rank} the $\textsc{Rank}_{n,n-1}$ problem and related linear algebra problems. 
Missing proofs, and the streaming and privacy applications are included in the Appendix. 

\section{Preliminaries}\label{sec:prelim}
{\bf Communication Complexity:}
We briefly summarize the notions from communication complexity that we will need. 
For more background on communication complexity, we refer
the reader to~\cite{kushilevitz1997communication}. 

Let $f:X \times Y \to \{1,-1\}$ be a given function, which could be a partial function. Let $\dom(f)$ be the domain of definition of $f$. Alice and Bob, with unlimited computing power, want to compute $f(x,y)$ for $(x,y)\in\dom(f)$. Alice only knows $x\in X$ and Bob $y\in Y$. To perform the computation, they follow a protocol $\Pi$ and send messages to each other in order to converge on a shared output $\Pi(x,y)$. We say a deterministic protocol $\Pi$ computes $f$ if $\Pi(x,y)=f(x,y)$ for all inputs $(x,y)\in\dom(f)$, and define the \emph{deterministic communication complexity}, denoted by $D(f)$, to be the minimum over correct deterministic protocols for $f$, of the maximum number of bits communicated over all inputs. In a randomized protocol, Alice and Bob toss private coins and the messages can depend on the coin flips. We say a randomized protocol $\Pi$ computes $f$ with error probability $\delta$ if $\Pr\{\Pi(x,y)=f(x,y)\}\geq 1-\delta$ for all inputs $(x,y)\in\dom(f)$, and define the \textit{randomized communication complexity}, denoted by $R_{\delta}(f)$, in the same way. When Alice and Bob share public random coins, the randomized communication complexity is denoted by $\R^{\pub}_{\delta}(f)$. Let $\mu$ be a probability distribution on $X\times Y$. The
$\mu$-\emph{distributional communication complexity} of $f$, denoted by $D_\delta^\mu(f)$, is the least
cost of a deterministic protocol for $f$ with error probability at most $\delta$ with
respect to $\mu$. Yao's principle states that $R_\delta^{\pub}(f) = \max_\mu D_\delta^\mu(f)$.

In the model for multiparty communication complexity, there are $s$ players, each gets an input $x_i\in X_i$, and they want to compute some function $f:X_1\times\cdots\times X_s\to \{-1, 1\}$ (which could be partially defined). We shall assume the coordinator model, in which there is an additional player called coordinator, who has no input. Players can only communicate with the coordinator but not each other directly. The coordinator will output the value of $f$. The private-coin, public-coin randomized communication complexity and $\mu$-distributional communication complexity are denoted by $R_\delta^s(f)$, $R_{\delta}^{s,\pub}(f)$, and $D_\delta^{s,\mu}(f)$, respectively.

{\bf Information Theory:} Let $(X,Y)$ be a pair of discrete random variables with joint distribution $p(x,y)$.
Suppose that $X$ is a discrete random variable on $\Omega$ with distribution $p(x)$. Then the entropy $H(X)$ of the random variable $X$ is defined by $H(X) = -\sum_{x\in\Omega} p(x)\log_2 p(x)$. The joint entropy $H(X,Y)$ of a pair of discrete random variables $(X,Y)$ with joint distribution $p(x,y)$ is defined as $H(X,Y)=-\sum_x \sum_y p(x,y)\log p(x,y)$. The conditional entropy $H(X|Y)$ is defined as $H(X|Y)= \sum_y H(X|Y=y)\Pr\{Y=y\}$, where $H(X|Y=y)$ is the entropy of the conditional distribution of $X$ given the event $\{Y=y\}$. The mutual information $I(X;Y)$ is defined as $I(X;Y)=\sum_{x,y} p(x,y)\log \frac{p(x,y)}{p(x)p(y)}$, where $p(x)$ and $p(y)$ are marginal distributions.

{\bf Information Cost:} The following two definitions are from \cite{GO13}. The \emph{information cost} $\text{ICost}(\Pi)$ of a protocol $\Pi$ on input distribution $\mu$ equals the mutual information $I(X; \Pi(X))$, where $X$ is a random variable distributed according to $\mu$
and $\Pi(X)$ is the transcript of $\Pi$ on input $X$. The information complexity $\IC_{\mu,\delta}(f)$ of a problem $f$ on a distribution $\mu$ with error probability $\delta$ is the infimum of $\text{ICost}(\Pi)$
taken over all private-randomness protocols $\Pi$ that err with probability at most $\delta$ for any input. When $\delta$ is clear from the context, we also write the information complexity as $\IC_\mu(f)$ for simplicity.

\section{Reduction for Multi-Player Communication}\label{sec:multiparty}

Let $(G,\otimes)$ be a finite group and $f$ be a function on $G$ (could be a partial function). Suppose that $G = \bigcup_i G_i$ is the coarsest partition of $G$ such that $f$ is a constant function (allowing the value to be undefined) over each $G_i$. For a subset $X\subseteq G$, let $\pre(X) := \{(g_1,g_2)\in G\times G: g_1 \otimes g_2 \in X\}$. Let $I(f) = \{i: G_i\subseteq \dom(f) \}$, where $\dom(f)\subseteq G$ is the set on which $f$ is defined.

We say that a family $\mathcal{H}$ of functions $h:G\times G\to G\times G$ is a \emph{uniformizing family} for function $f$ if there exists a probability measure $\mu$ on $\mathcal{H}$ such that for any $i$ and $(g_1,g_2)\in \pre(G_i)$, when $h\in\mathcal{H}$ is randomly chosen according to $\mu$, the image $h(g_1,g_2)$ is uniformly distributed on $\pre(G_i)$. 

\begin{example}[$\textsc{Rank}_{n,n-1}$] $G = M_n(\mathbb{F})$, the group of all $n\times n$ matrices over $\F$, with $\otimes$ being the usual matrix addition. In fact $G$ is a ring, with the usual matrix multiplication. Define
\[
f(x) = \begin{cases}
		1, & \rk(x) = n;\\
		0, & \rk(x) = n-1;\\
		\text{undefined}, & \text{otherwise},
\end{cases}\qquad x\in G.
\]
Then $I(f) = \{1,2\}$ and $G_1 = \{x\in G: \rk(x) = n\}$ and $G_2 = \{x\in G: \rk(x) = n-1\}$.
The uniformizing family is $\mathcal{H} = \{h_{a,b}\}_{a\in G_1,b\in G}$ endowed with uniform measure, where $h_{a,b}(g_1,g_2) = (a(g_1-b),a(g_2+b))$.
\end{example}

\begin{example}[$\textsc{Ham}_{n,k}$] $G = \mathbb{F}_2^n$ with the usual vector addition. Define
\[
f(x) = \begin{cases}
		1, & w(x) = k;\\
		0, & w(x) = k+2;\\
		\text{undefined}, & \text{otherwise},
\end{cases}\qquad x\in G.
\]
Then $|I(f)|=2$.
Let $S_n$ denote the symmetric group of degree $n$. The uniformizing family $\mathcal{H} = \{h_{\sigma,b}\}_{\sigma\in S_n,b\in G}$ endowed with uniform measure, where $h_{\sigma,b}(g_1,g_2) = (\sigma(g_1-b),\sigma(g_2+b))$.

By reduction from Disjointness problem, we know that $\R_{1/10}^{\pub}(\textsc{Ham}_{k,k+2}) = \Omega(k)$.
\end{example}

As an auxiliary problem to the $\textsc{IP}$ problem, we define
\begin{itemize}
\item Problem $\textsc{IP}_n'$: Suppose that $p > 2$. Alice and Bob hold two vectors $x,y\in (\F_p^\ast)^n$ respectively. We promise that inner product $\langle x,y\rangle\in\{0,1\}$. They want to output $\langle x,y\rangle$.
\end{itemize}

Removing $0$ from the scalar domain gives us a group structure as below.

\begin{example}[$\textsc{IP}'_n$] 
$G = (\mathbb{F}_p^\ast)^n$ associated with the multiplication $\otimes$ defined to be the pointwise product, i.e., $x\otimes y = (x_1y_1,x_2y_2,\dots,x_ny_n)$. Let $f(x) = \mathbf{1}_{\{x_1+x_2+\cdots+x_n=0\}}$.
\end{example}

The following problem was considered in \cite{SWY12}.
\begin{itemize}
\item Problem $\textsc{Cycle}_{n}$: Let $\pi$ and $\sigma$ be permutations in symmetric group $S_n$. Alice holds $\pi$ and Bob $\sigma$, and they want to return $1$ if $\pi\circ\sigma$ is exactly 1-cycle and return $0$ otherwise.
\end{itemize}

\begin{example}[$\textsc{Cycle}_{n}$] $G = S_n$, the symmetric group of degree $n$, with the usual permutation composition. Define
\[
f(x) = \begin{cases}
		1, & x\text{ has exactly one cycle};\\
		0, 
		& \text{otherwise},
		\end{cases} \qquad x\in G.
\]
Then $|I(f)|=2$. The uniformizing family is $\mathcal{H} = \{h_{\sigma,\tau}\}_{\sigma,\tau\in S_n}$ endowed with uniform measure, where $h_{\sigma,\tau}(g_1,g_2) = (\sigma^{-1}g_1\tau^{-1},\tau g_2\sigma)$. Observe that $g\mapsto \sigma^{-1} g\sigma$ maps a cycle $(a_1,\dots,a_k)$ of $g$ to $(\sigma(a_1),\dots,\sigma(a_k))$, it is easy to verify that $\mathcal{H}$ is a uniformizing family indeed. It has been shown in \cite{SWY12} that $\R_{1/3}^\pub(\textsc{Cycle}_{n}) = \Omega(n)$.
\end{example}
We analyze the randomized communication complexity of problems that have a uniformizing family.
\begin{definition}\label{def:symmetric} A distribution $\nu$ on $G\times G$ is called \emph{weakly sub-uniform} if
\begin{enumerate}
	\item $\nu$ is supported on $\bigcup_{i\in I(f)} \pre(G_i)$ 
	\item $\nu|_{\pre(G_i)}$ is uniform for all $i\in I(f)$
\end{enumerate}
In addition, if $\nu(\pre(G_i)) = 1/|I(f)|$ for all $i\in I(f)$, 
we say $\nu$ is the \emph{sub-uniform} distribution.
\end{definition}

\begin{theorem}\label{thm:hardest_dist} If there exists a uniformizing family for $f$ and $\delta\cdot |I(f)| < 1$, then for the two-player game computing $f$ it holds that 
\[
\R_{\delta|I(f)|}^{\pub}(f) \leq D_{\delta}^{\nu}(f) \leq C\log_{|I(f)|\delta}\delta\cdot R^{\pub}_{|I(f)|\delta}(f)
\]
where $C>0$ is an absolute constant and $\nu$ the sub-uniform distribution on $G\times G$.
\end{theorem}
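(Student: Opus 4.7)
The plan is to prove the two inequalities separately; each reduces to Yao's easy direction plus one problem-specific ingredient, namely the uniformizing family on the left and standard error amplification on the right.

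For the lower bound $\R^{\pub}_{\delta|I(f)|}(f)\leq D^{\nu}_{\delta}(f)$, I would take any deterministic protocol $\Pi$ achieving distributional error at most $\delta$ under $\nu$ and convert it into a public-coin randomized protocol $\Pi'$ of the same cost whose worst-case error is at most $\delta|I(f)|$. The construction is to publicly sample $h\sim\mu$ from the uniformizing measure, have Alice and Bob locally apply their respective coordinates of $h$ (possible for every family exhibited in the examples, since each $h_{\cdot}$ acts on the two input coordinates separately through the publicly shared parameters), and then run $\Pi$ on $h(g_1,g_2)$. By the uniformizing property, whenever $(g_1,g_2)\in\pre(G_i)$ the image $h(g_1,g_2)$ is uniform on $\pre(G_i)$; since $f$ is constant on $G_i$, the error probability
\[
e_i := \Pr_{h}[\Pi(h(g_1,g_2))\neq f(g_1\otimes g_2)]
\]
depends only on $i$ and equals the average error of $\Pi$ on $\pre(G_i)$ under the uniform distribution. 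Because $\nu|_{\pre(G_i)}$ is uniform with total mass $1/|I(f)|$, we have
\[
\delta \;\geq\; \Pr_{(g_1,g_2)\sim\nu}[\Pi\text{ errs}] \;=\; \frac{1}{|I(f)|}\sum_{i\in I(f)} e_i,
\]
so in particular every individual $e_i$ is at most $\delta|I(f)|$, which is exactly the worst-case error of $\Pi'$.

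For the upper bound, the plan is to take an optimal $|I(f)|\delta$-error public-coin protocol $\Pi_0$, run $k$ independent copies on the same input, and output the majority vote. A standard amplification estimate gives $\Pr[\text{majority errs}]\leq (4p(1-p))^{k/2}$ for per-run error $p=|I(f)|\delta$, so setting this quantity below $\delta$ and solving yields $k=O(\log(1/\delta)/\log(1/(|I(f)|\delta)))=O(\log_{|I(f)|\delta}\delta)$ repetitions, which drops the worst-case error to $\delta$ while multiplying communication by $k$. Applying Yao's easy direction to the amplified protocol---i.e., fixing the public coins that minimize its error under $\nu$---produces a deterministic protocol of the same cost with distributional error at most $\delta$, yielding $D^{\nu}_{\delta}(f)\leq C\log_{|I(f)|\delta}\delta\cdot \R^{\pub}_{|I(f)|\delta}(f)$.

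The main subtlety, and the only step requiring care, is confirming that each $h\in\mathcal{H}$ can be realized with zero additional communication, i.e., that $h$ factors as a pair of coordinate-wise maps each player can apply locally from the shared parameters. This is implicit in how the uniformizing families are given ($h_{a,b}(g_1,g_2)=(a(g_1-b),a(g_2+b))$ for $\scrank$ and $\scham$, and $h_{\sigma,\tau}(g_1,g_2)=(\sigma^{-1}g_1\tau^{-1},\tau g_2\sigma)$ for $\textsc{Cycle}$), and should either be included as an implicit hypothesis in the definition of a uniformizing family or verified case by case in each application. Once this factorization is granted, both inequalities follow from the arguments above.
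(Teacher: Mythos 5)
Your proposal is correct and follows essentially the same route as the paper: the left inequality via publicly sampling $h$ from the uniformizing family and running the optimal $\nu$-distributional protocol on $h(g_1,g_2)$, and the right inequality via Yao's easy direction combined with standard majority-vote amplification from error $|I(f)|\delta$ down to $\delta$. Your explicit error accounting ($e_i \le \delta|I(f)|$ from the averaging over the sub-uniform $\nu$) and your remark that each $h$ must act coordinate-wise so the players can apply it locally are details the paper leaves implicit, but they do not change the argument.
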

\begin{proof}
Suppose the input is $(g_1,g_2)\in G\times G$. Next we describe a public-coin protocol $\Pi'$. With the public randomness, Alice and Bob choose a random $h$ from the uniformizing family. They then run the optimal protocol $\Pi_{\nu}$ for input distribution $\nu$ (i.e., $\cost(\Pi_{\nu}) = D_\delta^{\nu}(f)$) on input $h(g_1,g_2)$. 

It is not difficult to see that the public-coin protocol $\Pi'$ has error probability at most $\delta\cdot |I(f)|$. Therefore, $R^{\pub}_{\delta\cdot |I(f)|}\leq \cost(\Pi') = \cost(\Pi_\nu) = D_{\delta}^{\nu}(f)$. On the other hand, by Yao's principle, $\R_\delta^{\pub}(f)\geq D_\delta^{\nu}(f)$. Note that $R^{\pub}_\delta(f)\leq C \log_{|I(f)|\delta}\delta\cdot R^{\pub}_{|I(f)|\delta}(f)$ for some absolute constant $C$, the conclusion follows.
\end{proof}

Now consider the following multi-player problem in coordinator model: There are $s$ players and a coordinator. Each player receives an input $x_i\in G$. The coordinator will output the value of $f(x_1\otimes x_2\otimes \cdots\otimes x_s)$ with probability $\geq 1-\delta$. Denote by $C_\delta^{s,\pub}(f)$ the number of bits that must be exchanged by the best protocol. By the symmetrization technique from \cite{PVZ12}, we have the following lemma.

\begin{lemma} Suppose that there exists a uniformizing family for $f$. Let $\nu$ be an arbitrary weakly sub-uniform distribution on $G\times G$ and $\Pi_\nu$ be a public-coin protocol that computes $f$ with error probability $\delta$ on input distribution $\nu$. Then $\R_{\delta}^{s,\pub}(f) \geq s\mathbb{E}[\cost(\Pi_\nu)]$.
\end{lemma}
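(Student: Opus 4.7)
The plan is to apply the symmetrization technique of Phillips \etal: take any optimal $s$-player public-coin protocol $\Pi$ for $f$ with worst-case cost $C := \R_{\delta}^{s,\pub}(f)$ and convert it into a 2-player protocol on $\nu$ by having Alice simulate one randomly chosen player $k^* \in [s]$ while Bob simulates the remaining $s-1$ players together with the coordinator. Averaging over the choice of $k^*$ then yields expected 2-player communication at most $C/s$.

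Concretely, I would introduce an $s$-player input distribution $\tau$ on $G^s$ designed so that, for every $k^*$, the pair $(x_{k^*}, \bigotimes_{j \ne k^*} x_j)$ has distribution $\nu$. This is done by sampling an index $i \in I(f)$ with probability $\nu(\pre(G_i))$ and then drawing $(x_1, \ldots, x_s)$ uniformly from the set of $s$-tuples whose $\otimes$-product lies in $G_i$; the weak sub-uniformity of $\nu$ (uniformity on each $\pre(G_i)$) is precisely what makes these two constructions match. The resulting 2-player simulation is then: given $(g_1, g_2) \sim \nu$, Alice and Bob use public randomness to pick $k^* \in [s]$ uniformly, set $x_{k^*} := g_1$, and sample the remaining $(x_j)_{j \ne k^*}$ uniformly conditioned on $\bigotimes_{j \ne k^*} x_j = g_2$. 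Alice plays player $k^*$; Bob plays every other player plus the coordinator, so the only messages crossing the Alice/Bob boundary in the simulation of $\Pi$ are those exchanged between player $k^*$ and the coordinator.

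For the cost analysis, let $C_k(x_1, \ldots, x_s)$ denote the expected number of bits exchanged between player $k$ and the coordinator in $\Pi$ on input $(x_1, \ldots, x_s)$; since $\Pi$ has worst-case cost $C$, we have $\sum_{k=1}^s C_k \le C$ pointwise. Uniform randomness over $k^*$ therefore gives
\[
\mathbb{E}[\cost(\Pi_\nu)] \;=\; \mathbb{E}_{k^*,\,\tau}[C_{k^*}] \;=\; \frac{1}{s}\,\mathbb{E}_{\tau}\Bigl[\sum_{k=1}^{s} C_k\Bigr] \;\le\; \frac{C}{s}.
\]
Correctness is immediate because the simulated input is exactly $\tau$-distributed and $\Pi$ errs with probability at most $\delta$ on every input, so $\Pi_\nu$ inherits error $\delta$ on $\nu$.

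The main obstacle is the non-abelian case, for example $\textsc{Cycle}_{n}$, where $\bigotimes_{j \ne k^*} x_j$ depends on the order of the skipped-position split and Bob a priori holds a pair $(L, R)$ of prefix and suffix products rather than a single group element $g_2$. Here the uniformizing family $\mathcal{H}$ does the heavy lifting: its defining property lets one first apply a uniformly random $h \in \mathcal{H}$ to $(g_1, g_2)$ to obtain a uniformly random element of $\pre(G_i)$, absorbing conjugation and ordering artifacts into a clean re-randomization; one can then use a class-preserving identification (e.g.\ $L\pi R$ is conjugate to $\pi(RL)$ for permutations) so Bob's combined input collapses to a single group element without changing the value of $f$. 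Modulo this technicality, the marginal-matching construction of $\tau$ and the averaging bound above go through unchanged.
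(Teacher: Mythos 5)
Your proposal is correct and follows essentially the same symmetrization argument as the paper's proof: a uniformly random player index chosen with public coins, one party simulating that player and the other simulating the remaining players together with the coordinator, the missing inputs sampled conditioned on their product, and an averaging bound giving expected two-player cost at most $\R_{\delta}^{s,\pub}(f)/s$; your distribution $\tau$ is exactly the paper's $\nu_s$. The only differences are minor: you start from the worst-case optimal randomized $s$-player protocol instead of a deterministic distributional one (and swap Alice's and Bob's roles), and you explicitly flag the non-abelian ordering subtlety (relevant only to $\textsc{Cycle}_n$, fixed by conjugation-invariance of the classes) that the paper's ``it is not hard to see'' glosses over.
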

\begin{proof}
Let $\nu_s$ be the distribution over $G^s$ such that $\nu_s$ is the uniform distribution over $\pre_s(G_i):=\{(x_1,\dots,x_s)\in G^s: x_1\otimes\cdots\otimes x_s\in G_i\}$ when restricted onto it and $\nu_s(\pre_s(G_i)) = \nu(\pre(G_i))$. Let $\Pi_s$ be an $s$-player (deterministic) protocol for input distribution $\nu_s$ with error probability $\delta$.

Consider the following two-player protocol $\Pi'$ on input $(g_1,g_2)\sim \nu$: First suppose that Alice and Bob have public randomness. They first use the public randomness to agree on an index $j$ chosen at random uniformly from $\{1,\dots,s\}$. Alice also generates, using her own randomness, the input $\{x_i\}_{i\neq j}$ of other players uniformly at random conditioned on $\bigotimes_{i\neq j} x_i = g_1$. Then Alice and Bob run the $s$-player protocol, in which Bob simulates player $j$ with input $x_j := g_2$, and Alice simulates all other players and the coordinator. The message sent in this protocol is just the message sent between the coordinator and player $j$ in $\Pi_s$. 

It is not hard to see that $(x_1,\dots,x_s)\sim \nu_s$. It follows from a symmetrization argument like the proof \cite[Theorem 1.1]{PVZ12} that $\mathbb{E}[\cost(\Pi')] \leq \cost(\Pi_s)/s$, where the expectation is taken over the public coins. The conclusion follows from taking the infimum over $\Pi_s$.
\end{proof}

\begin{theorem}\label{thm:multiparty}
Suppose that there exists a uniformizing family for $f$, then $\R_{\delta}^{s,\pub}(f) \geq \delta s \R_{2|I(f)|\delta}^{\pub}(f)$.
\end{theorem}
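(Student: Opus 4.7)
\textbf{Proof plan for Theorem \ref{thm:multiparty}.}

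The plan is to chain together the preceding lemma (which reduces the $s$-player complexity to an expected-cost two-player bound on the sub-uniform distribution $\nu$) with Theorem \ref{thm:hardest_dist} (which upgrades a distributional lower bound on $\nu$ to a lower bound on $\R^{\pub}$). The only non-routine step is converting the expected-cost guarantee coming out of the lemma into a worst-case distributional complexity statement that Theorem \ref{thm:hardest_dist} can ingest; this is done by a Markov-inequality truncation.

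First I would let $\Pi^{\ast}$ be an optimal $s$-player public-coin protocol for $f$ with error $\delta$, so $\cost(\Pi^{\ast}) = \R_\delta^{s,\pub}(f)$. Taking $\nu$ to be the sub-uniform distribution on $G\times G$ (which is weakly sub-uniform by definition) and running the construction in the proof of the preceding lemma on $\Pi^{\ast}$ produces a public-coin two-player protocol $\Pi'$ which computes $f$ with error $\delta$ under $\nu$ and satisfies $\mathbb{E}_r[\cost(\Pi'_r)] \le \R_\delta^{s,\pub}(f)/s$, where the expectation is over the public coins $r$.

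Next I would convert this expected-cost statement into a worst-case one. Set $L := \R_\delta^{s,\pub}(f)/(s\delta)$. By Markov's inequality, the set $R := \{r : \cost(\Pi'_r) \le L\}$ has probability at least $1-\delta$ under the public coins. Define $\Pi''$ by running $\Pi'$ only when the sampled $r\in R$ and aborting (outputting an arbitrary answer) otherwise. Then $\Pi''$ is still a public-coin protocol, has worst-case (over inputs and coins) cost at most $L$, and its error probability on $\nu$ is at most $\delta + \delta = 2\delta$: the first $\delta$ from the event $r\notin R$, the second from the original error of $\Pi'$. Fixing the randomness to the best $r\in R$ gives a deterministic protocol witnessing
\[
D^{\nu}_{2\delta}(f) \;\le\; L \;=\; \frac{\R_\delta^{s,\pub}(f)}{s\,\delta}.
\]

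Finally I would invoke Theorem \ref{thm:hardest_dist} with $\delta$ replaced by $2\delta$ (valid since we may assume $2\delta\,|I(f)|<1$, otherwise the claimed lower bound is trivial), giving $\R_{2|I(f)|\delta}^{\pub}(f) \le D^{\nu}_{2\delta}(f)$. Combining the two inequalities yields
\[
\R_{2|I(f)|\delta}^{\pub}(f) \;\le\; \frac{\R_\delta^{s,\pub}(f)}{s\,\delta},
\]
which rearranges to the desired bound $\R_\delta^{s,\pub}(f) \ge \delta\, s\, \R_{2|I(f)|\delta}^{\pub}(f)$. The main subtlety is the Markov truncation step, which is the source of both the extra factor of $\delta$ out front and the factor of $2$ inside the error parameter; everything else is a direct substitution into the two previously established results.
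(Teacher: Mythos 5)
Your proposal is correct and takes essentially the same route as the paper's proof: instantiate the preceding symmetrization lemma with the sub-uniform distribution $\nu$, use a Markov bound together with fixing the public coins to obtain $D_{2\delta}^{\nu}(f) \le \R_{\delta}^{s,\pub}(f)/(s\delta)$, and then apply Theorem~\ref{thm:hardest_dist} with error parameter $2\delta$. The only difference is presentational — you spell out the Markov truncation and the harmless assumption $2|I(f)|\delta<1$ explicitly, which the paper leaves terse.
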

\begin{proof}
Pick $\nu$ to be the sub-uniform distribution in the preceding lemma. By fixing the public coins and a Markov bound, one can construct a two-player deterministic protocol $\Pi''$ such that $\cost(\Pi'') \leq (1/\delta)\cost(\Pi_s)/s$ and $\Pi''$ succeeds with probability at least $1-2\delta$ when the input is distributed as $\nu$. Hence $D_{2\delta}^\nu(f) \leq (1/\delta)\cost(\Pi_s)/s$. It then follows from Theorem~\ref{thm:hardest_dist} that 
$R^{\pub}_{2|I(f)|\delta}(f) \leq (1/\delta)\cost(\Pi_s)/s$. Taking infimum over $\Pi_s$, we obtain that $\R_{2|I(f)|\delta}^{\pub}(f) \leq (1/\delta) D_{\delta}^{\nu_s}(f)/s \leq (1/\delta) R^{s,\pub}_{\delta}(f)/s$.
\end{proof}

The following are immediate corollaries of the theorem above applied to our previous Example 2 and 4. We leave the results of Example 1 and 3 for later sections.
\begin{corollary} $\R_{1/12}^{s,\pub}(\textsc{Ham}_{k,k+2}) = \Omega(s\R_{1/3}^{\pub}(\textsc{Ham}_{k,k+2})) = \Omega(sk)$.
\end{corollary}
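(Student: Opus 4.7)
The plan is to obtain the corollary as a direct instantiation of Theorem~\ref{thm:multiparty} applied to the Hamming problem, using the uniformizing family already exhibited in Example~2. The ingredients we need are (i) verification that the hypothesis of the theorem holds for $\textsc{Ham}_{k,k+2}$, (ii) the right choice of error parameter, and (iii) the stated two-player lower bound.

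First I would confirm that $\mathcal{H}=\{h_{\sigma,b}\}_{\sigma\in S_n,\,b\in\F_2^n}$ with $h_{\sigma,b}(g_1,g_2)=(\sigma(g_1+b),\sigma(g_2+b))$ really is a uniformizing family. The key observation is that in $\F_2^n$ subtraction equals addition, so $\sigma(g_1+b)+\sigma(g_2+b)=\sigma(g_1+g_2)$, independent of $b$. Hence if $g_1+g_2\in G_i$, the transformed pair still lies in $\pre(G_i)$. Moreover, picking $b$ uniformly makes the first coordinate $\sigma(g_1+b)$ uniform on $\F_2^n$, and conditioned on that, the second coordinate equals this first coordinate plus $\sigma(g_1+g_2)$, which, as $\sigma$ ranges uniformly over $S_n$, is uniform over all vectors of the fixed Hamming weight $w(g_1+g_2)$, i.e., uniform over $G_i$. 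This exactly matches the uniform distribution on $\pre(G_i)$, verifying the uniformizing property.

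Next I would feed this into Theorem~\ref{thm:multiparty} with $|I(f)|=2$ and $\delta=1/12$, which gives $2|I(f)|\delta=1/3$, so
\[
\R_{1/12}^{s,\pub}(\textsc{Ham}_{k,k+2})\;\geq\;\tfrac{s}{12}\,\R_{1/3}^{\pub}(\textsc{Ham}_{k,k+2}).
\]
Finally, combining with the known two-player bound $\R_{1/3}^{\pub}(\textsc{Ham}_{k,k+2})=\Omega(k)$ (obtained by reduction from $\textsc{Disj}_k$ together with standard error reduction from the $1/10$-error version noted in Example~2) yields the claimed $\Omega(sk)$ bound.

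There is no serious obstacle here; the only delicate point is matching error probabilities, which is why $\delta$ is chosen to be $1/12$ rather than $1/10$, so that the factor $2|I(f)|\delta$ coming out of Theorem~\ref{thm:multiparty} lands at the convenient constant $1/3$ for which the two-player hardness is quoted. Everything else is purely mechanical, so the proof in the paper should be a single short line invoking Theorem~\ref{thm:multiparty} and the known two-player lower bound.
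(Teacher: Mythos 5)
Your proposal is correct and takes essentially the same route as the paper: the corollary is stated there as an immediate instantiation of Theorem~\ref{thm:multiparty} on Example~2 with $\delta=1/12$ (so that $2|I(f)|\delta=1/3$), combined with the known two-player bound $\R_{1/3}^{\pub}(\textsc{Ham}_{k,k+2})=\Omega(k)$. Your explicit verification that $\{h_{\sigma,b}\}$ is a uniformizing family over $\F_2^n$ is exactly the detail the paper leaves implicit, and it is argued correctly.
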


\begin{corollary} $\R_{1/12}^{s,\pub}(\textsc{Cycle}_{n}) = \Omega(sn)$. 
\end{corollary}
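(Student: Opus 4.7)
The plan is to derive this as a direct corollary of Theorem~\ref{thm:multiparty}, following exactly the template of the preceding corollary for $\textsc{Ham}_{k,k+2}$. All the ingredients are already in hand from Example~4: the uniformizing family $\mathcal{H} = \{h_{\sigma,\tau}\}_{(\sigma,\tau)\in S_n\times S_n}$ with $h_{\sigma,\tau}(g_1,g_2) = (\sigma^{-1}g_1\tau^{-1},\,\tau g_2\sigma)$ under uniform measure, together with the cited two-player bound $\R_{1/3}^{\pub}(\textsc{Cycle}_n) = \Omega(n)$ from \cite{SWY12}. The coarsest partition of $S_n$ on which $f$ is constant has $|I(f)| = 2$, with $G_1$ being the conjugacy class of $n$-cycles and $G_2$ its complement.

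First I would verify the uniformizing property of $\mathcal{H}$. For uniform $(\sigma,\tau)$ and any $(g_1,g_2)$, the product of the image is $\sigma^{-1}(g_1g_2)\sigma$, a uniformly random conjugate of $g_1g_2$; since conjugation preserves cycle type, membership in $G_1$ or in $G_2$ is preserved. For each fixed $\sigma$, as $\tau$ ranges uniformly over $S_n$, the first coordinate $\sigma^{-1}g_1\tau^{-1}$ sweeps $S_n$ uniformly, and the second is then determined by the conjugacy constraint, giving the uniformity required by Definition~\ref{def:symmetric}. Plugging $\delta=1/12$ and $|I(f)|=2$ into Theorem~\ref{thm:multiparty} so that $2|I(f)|\delta = 1/3$ then gives
\[
\R_{1/12}^{s,\pub}(\textsc{Cycle}_n) \;\geq\; \tfrac{1}{12}\cdot s\cdot \R_{1/3}^{\pub}(\textsc{Cycle}_n) \;=\; \Omega(sn),
\]
which is the corollary.

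The one point that deserves care is the uniformity check on $\pre(G_2)$: since $G_2$ decomposes into several $S_n$-conjugacy classes, conjugation acts transitively only within each class, so a priori $h_{\sigma,\tau}$ uniformizes only the $\pre$-image of the single conjugacy class of $g_1g_2$, not all of $\pre(G_2)$. I would resolve this by restricting attention to the promise variant of $\textsc{Cycle}_n$ in which $g_1g_2$ lies either in the $n$-cycle class or in a fixed single non-$n$-cycle class $C$ (say, the class of a single $(n{-}1)$-cycle with one fixed point); on this restriction $\mathcal{H}$ genuinely uniformizes both $\pre(G_1)$ and $\pre(G_2)$, $|I(f)|$ stays equal to $2$, and the $\Omega(n)$ two-player lower bound from \cite{SWY12} is inherited because its reduction from Disjointness produces hard instances of exactly this restricted form. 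Alternatively, one can appeal directly to the auxiliary Lemma preceding Theorem~\ref{thm:multiparty}, which only requires a \emph{weakly sub-uniform} distribution on $G\times G$, a condition already satisfied by the conditional uniform distribution on the $\pre$-image of any single conjugacy class.
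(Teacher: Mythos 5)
Your main derivation is exactly the paper's: the corollary is obtained by feeding Example~4 (the family $h_{\sigma,\tau}(g_1,g_2)=(\sigma^{-1}g_1\tau^{-1},\tau g_2\sigma)$ with $|I(f)|=2$) and the bound $\R_{1/3}^{\pub}(\textsc{Cycle}_n)=\Omega(n)$ from \cite{SWY12} into Theorem~\ref{thm:multiparty} with $\delta=1/12$, so that $2|I(f)|\delta=1/3$; the paper presents this as an immediate corollary with no further argument, and your parameter bookkeeping matches it.

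The extra scrutiny you devote to the uniformizing property is warranted: for $(g_1,g_2)\in\pre(G_2)$ the image product $\sigma^{-1}g_1g_2\sigma$ stays inside the conjugacy class of $g_1g_2$, so the image is uniform only on the preimage of that single class, not on all of $\pre(G_2)$; the paper's ``easy to verify'' claim in Example~4 glosses over exactly this point, so the caveat applies to the paper as much as to you. Of your two repairs, only the first has the right shape: restricting to a promise problem supported on the $n$-cycle class and one fixed non-$n$-cycle class $C$ does make $\mathcal{H}$ a genuine uniformizing family with $|I(f)|=2$. However, the assertion that the hard instances of \cite{SWY12} are of exactly this restricted form (and in particular that $C$ can be taken to be an $(n-1)$-cycle with a fixed point) is stated, not verified; the natural hard instances for this kind of problem are ``one cycle versus two cycles,'' which would dictate a different choice of $C$, so this step needs to be checked against the actual reduction rather than assumed. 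The second repair does not close the gap on its own: the conditional uniform distribution on the preimage of a single class inside $G_2$ is not weakly sub-uniform with respect to the original two-part partition (Definition~\ref{def:symmetric} requires uniformity on all of $\pre(G_2)$), and even granting a suitable distribution, the auxiliary lemma only bounds $\R_\delta^{s,\pub}(f)$ by $s$ times the cost of a protocol for that specific distribution --- converting this into $\Omega(sn)$ still requires a hardest-distribution statement in the spirit of Theorem~\ref{thm:hardest_dist}, which is precisely where the uniformizing property (hence your first fix) is needed. In short: same route and same conclusion as the paper, with a legitimate observation about Example~4 whose proposed resolution is only partially substantiated.
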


\section{The \textsc{IP} Problem}\label{sec:IP}

Let $p$ be a prime. Sun \etal considered a variant of the \textsc{IP} problem, denoted by $\textsc{IP}_n''$, in which Alice has $x\in F_p^n$ and Bob $y\in (F_p^\ast)^n$, and showed that $\R_{1/3}^{\pub}(\textsc{IP}_n'') = \Omega(n\log p)$ \cite{SWY12}. Via a simple reduction, we show that

\begin{lemma}\label{lem:IP_reduction} When $p\geq p_0$ for some constant $p_0$, it holds that $\R_{1/3}^{\pub}(\textsc{IP}'_n) = \Omega(n\log p)$.
\end{lemma}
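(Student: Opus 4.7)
The plan is to give a zero-communication reduction from $\textsc{IP}''_n$ to $\textsc{IP}'_{2n}$. Given an instance $(x,y)$ of $\textsc{IP}''_n$ with $x \in \F_p^n$ and $y \in (\F_p^\ast)^n$, I want to produce an instance $(x',y') \in (\F_p^\ast)^{2n} \times (\F_p^\ast)^{2n}$ of $\textsc{IP}'_{2n}$ that has the same inner product (hence the same value under the promise), and that can be produced by Alice and Bob using only their respective inputs (plus local or public randomness).

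The construction is as follows. Bob simply duplicates each coordinate: set $y'_{2i-1} = y'_{2i} = y_i$ for every $i$; since $y_i \neq 0$, we have $y' \in (\F_p^\ast)^{2n}$. For Alice, for each coordinate $i$ she picks $x'_{2i-1}$ uniformly at random from $\F_p^\ast \setminus \{x_i\}$ (using private coins) and sets $x'_{2i} = x_i - x'_{2i-1}$. By construction $x'_{2i-1} \neq 0$, and $x'_{2i} \neq 0$ because $x'_{2i-1} \neq x_i$, so indeed $x' \in (\F_p^\ast)^{2n}$. The sampling step needs $|\F_p^\ast \setminus \{x_i\}| \geq 1$, which holds as long as $p \geq 3$, so we can take $p_0 = 3$.

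To verify that the reduction preserves the inner product, I would compute
\[
\langle x', y' \rangle = \sum_{i=1}^n \bigl( x'_{2i-1} y_i + x'_{2i} y_i \bigr) = \sum_{i=1}^n y_i (x'_{2i-1} + x'_{2i}) = \sum_{i=1}^n x_i y_i = \langle x,y \rangle,
\]
so $\langle x',y'\rangle \in \{0,1\}$ exactly when $\langle x,y\rangle \in \{0,1\}$, and in that case the two values agree. Hence any protocol $\Pi$ for $\textsc{IP}'_{2n}$ with error $1/3$ yields a protocol for $\textsc{IP}''_n$ with error $1/3$ and the same communication cost, giving $\R_{1/3}^{\pub}(\textsc{IP}'_{2n}) \geq \R_{1/3}^{\pub}(\textsc{IP}''_n) = \Omega(n\log p)$ by the bound of Sun \etal. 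Replacing $2n$ by $n$ (which only affects hidden constants) yields $\R_{1/3}^{\pub}(\textsc{IP}'_n) = \Omega(n\log p)$.

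There is essentially no obstacle here: the reduction is a one-line trick (double the dimension and split each of Alice's coordinates into two nonzero summands). The only delicate point is the feasibility of the split, which is exactly what forces the mild hypothesis $p \geq p_0$ in the statement.
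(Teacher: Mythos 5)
Your reduction is correct, but it takes a genuinely different route from the paper's. The paper reduces in the direction $\R_{1/3}^{\pub}(\textsc{IP}''_n) \le n + \R_{1/3}^{\pub}(\textsc{IP}'_n)$: Alice sends Bob the set of her zero coordinates ($n$ bits), and the players then run an $\textsc{IP}'$ protocol on the remaining coordinates; the hypothesis $p \ge p_0$ is needed there precisely so that the additive $n$ bits is dominated by the $\Omega(n\log p)$ bound for $\textsc{IP}''_n$. You instead give a zero-communication embedding of $\textsc{IP}''_n$ into $\textsc{IP}'_{2n}$, with Bob duplicating coordinates and Alice splitting each $x_i$ into two nonzero summands; the inner product is preserved exactly, so the promise and the answer carry over, and the only constraint is feasibility of the split, i.e.\ $p\ge 3$ (in fact the randomness is unnecessary: any fixed admissible split works). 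What your route buys is the absence of any additive communication term, hence an explicit and small $p_0$ independent of the constant hidden in the $\Omega(n\log p)$ bound of Sun \etal; what it costs is a factor-$2$ blowup in dimension plus one small gloss at the end: ``replacing $2n$ by $n$'' needs a word for odd dimensions, since an $\textsc{IP}'$ instance cannot be padded by a single coordinate (two nonzero entries always contribute a nonzero product). This is easily repaired, e.g.\ by splitting one of Alice's coordinates into three nonzero parts (possible for $p\ge 3$) when the target dimension is odd, or by padding with the two coordinates $(1,-1)$ for Alice and $(1,1)$ for Bob. With that remark your argument is complete and, if anything, slightly stronger than the paper's, since it does not rely on $n$ being negligible compared to $n\log p$.
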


\begin{proof}
For an input of $\textsc{IP}''$, Alice can send the indices of the zero coordinates to Bob using $n$ bits; on the remaining coordinates, Alice and Bob have an instance of $\textsc{IP}'$ of size at most $n$. Hence $n + \R_{1/3}^{\pub}(\textsc{IP}'_n)\geq \R_{1/3}^{\pub}(\textsc{IP}_n'')$, whence the conclusion follows.
\end{proof}

It is clear, by Yao's principle, that $\R_\delta^{\pub}(\textsc{IP}_n)\geq \R_\delta^{\pub}(\textsc{IP}'_n)$.
Now, as an immediate corollary of Theorem~\ref{thm:multiparty}, we have 
\begin{theorem} $\R_{1/12}^{s,\pub}(\textsc{IP}_n) = \Omega(sn\log p)$.
\end{theorem}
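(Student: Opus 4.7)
The plan is to apply the generic reduction of Theorem~\ref{thm:multiparty} to $\textsc{IP}'_n$ using the group structure of Example~3 (with $G = (\F_p^\ast)^n$ under pointwise multiplication, and $|I(f)| = 2$), and then lift the resulting $s$-player lower bound from $\textsc{IP}'_n$ to $\textsc{IP}_n$ by a trivial restriction of inputs. Concretely, Theorem~\ref{thm:multiparty} at $\delta = 1/12$ yields
\[
\R^{s,\pub}_{1/12}(\textsc{IP}'_n) \;\geq\; \tfrac{1}{12}\cdot s\cdot \R^{\pub}_{2\cdot 2\cdot (1/12)}(\textsc{IP}'_n) \;=\; \tfrac{s}{12}\cdot \R^{\pub}_{1/3}(\textsc{IP}'_n),
\]
and substituting the two-player bound $\R^{\pub}_{1/3}(\textsc{IP}'_n) = \Omega(n\log p)$ supplied by Lemma~\ref{lem:IP_reduction} (for $p$ at least the constant $p_0$) gives $\R^{s,\pub}_{1/12}(\textsc{IP}'_n) = \Omega(sn\log p)$.

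To finish, I would observe that the $s$-player $\textsc{IP}_n$ problem over $\F_p^n$ contains the $s$-player $\textsc{IP}'_n$ problem over $(\F_p^\ast)^n$ as the sub-problem obtained by restricting every player's input to vectors with no zero coordinates: any protocol that is correct for $\textsc{IP}_n$ (computing the generalized inner product $\sum_j \prod_i x^{(i)}_j$ under the $\{0,1\}$ promise) is in particular correct on this restricted domain. Hence $\R^{s,\pub}_{1/12}(\textsc{IP}_n) \geq \R^{s,\pub}_{1/12}(\textsc{IP}'_n)$, and combining with the previous display gives the claimed $\Omega(sn\log p)$ bound.

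The step I expect to be the main obstacle is verifying the uniformizing family assumption of Theorem~\ref{thm:multiparty} for $f$ as in Example~3. Unlike for Example~2 or~4, coordinate permutations alone are not transitive on $G_0$ or $G_1$ (multisets of coordinates summing to the same value in general give distinct $S_n$-orbits), so one has to augment the family with multiplicative re-randomizations $(g_1,g_2)\mapsto (g_1\otimes a, g_2\otimes a^{-1})$, which uniformly re-randomize the factorization of $g_1\otimes g_2$ without changing its value, together with an $\F_p^\ast$-scaling action on $G_0$ that sweeps between $S_n$-orbits summing to zero. Once one checks that the resulting family $\mathcal{H}$ induces the uniform measure on each $\pre(G_i)$, the rest is a mechanical chaining of the two-player bound with the generic meta-theorem.
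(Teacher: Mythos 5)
Your main chain of inequalities is exactly the paper's argument: apply Theorem~\ref{thm:multiparty} with $|I(f)|=2$ and $\delta=1/12$ to $\textsc{IP}'_n$, plug in $\R^{\pub}_{1/3}(\textsc{IP}'_n)=\Omega(n\log p)$ from Lemma~\ref{lem:IP_reduction}, and use the trivial restriction $\R^{s,\pub}_{1/12}(\textsc{IP}_n)\geq \R^{s,\pub}_{1/12}(\textsc{IP}'_n)$. However, there is a genuine gap relative to the statement as claimed: your argument only covers $p\geq p_0$, since Lemma~\ref{lem:IP_reduction} is stated (and is only useful) for $p$ at least a constant $p_0$; for small constant $p$ the $n$-bit reduction in that lemma swallows the whole $\Omega(n\log p)=\Omega(n)$ bound, and $\textsc{IP}'_n$ itself degenerates (for $p=2$ the domain $(\F_2^\ast)^n$ is a single point). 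The theorem is for an arbitrary prime $p$, so the regime $p<p_0$ cannot simply be dropped, and your route gives nothing there. The paper closes this case by invoking the result of Braverman et al.~\cite{BEOPV13}, which gives an $\Omega(sn)$ multi-player lower bound for inner product with the $\{0,1\}$ promise over the integers; since $\log p=O(1)$ when $p<p_0$, this is already $\Omega(sn\log p)$. You need this (or some substitute) to complete the proof.

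A secondary point: you are right that the paper's Example~3 does not exhibit a uniformizing family for $\textsc{IP}'_n$, and that this is the step requiring care (the paper treats the corollary as immediate, so on this point you are no worse off than the paper). But your proposed fix does not work as described. The re-randomization $(g_1,g_2)\mapsto(g_1\otimes a,\,a^{-1}\otimes g_2)$ only uniformizes the factorization for a fixed product, and coordinate permutations together with a global scaling $x\mapsto \lambda x$ do not act transitively on the zero-sum vectors of $(\F_p^\ast)^n$: for instance in $\F_7^3$ both $(1,2,4)$ and $(1,1,5)$ have coordinate sum $0$, yet no permutation combined with a single scalar $\lambda$ maps one multiset to the other, since $\{\lambda,2\lambda,4\lambda\}$ always has three distinct entries. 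More generally, any re-randomization that Alice and Bob can apply locally acts on the product $g_1\otimes g_2$ only through maps of the form $z\mapsto c\otimes\psi(z)$ with $\psi$ an endomorphism of $(\F_p^\ast)^n$, and the ones among these that preserve the additive zero-sum condition are essentially the permutation-and-scaling maps just shown to be insufficient. So "checking that the resulting family induces the uniform measure on each $\pre(G_i)$" is not a routine verification; as sketched, the family does not have this property, and this part of the argument would need a genuinely different construction (or a weakening of the uniformity requirement in Theorem~\ref{thm:multiparty}).
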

\begin{proof}
Let $p_0$ be as in Lemma~\ref{lem:IP_reduction}. It follows from Lemma~\ref{lem:IP_reduction} and Theorem~\ref{thm:multiparty} that $
\R_{1/12}^{s,\pub}(\textsc{IP}) \geq \R_{1/12}^{s,\pub}(\textsc{IP}') = \Omega(s\R_{1/3}^{\pub}(\textsc{IP}')) = \Omega(sn\log p)$. 
When $p < p_0$, the result is due to Braverman et al. in~\cite{BEOPV13}, who prove an $\Omega(sn)$ lower bound
for $\textsc{IP}$ over the integers with the promise that the inner product is $0$ or $1$. Note that this implies an $\Omega(sn \log p)$
lower bound for computing $\textsc{IP}$ over $\mathbb{F}_p$ as well, since $p < p_0$ is a fixed constant. 
\end{proof}

\section{The $\scrank$ Problem}
\label{sec:rank}

We shall use the Fourier witness method to prove a lower bound on \scrank$'_{n,n-1}$. We then use this result for \scrank$_{n,n-1}$ to obtain lower bounds for the other problems. We review some basics of the Fourier witness method in Section~\ref{sec:fourierwitness} then give the proof of the lower bound in Section~\ref{sec:rank_proof}.

\subsection{Fourier Witness Method}
\label{sec:fourierwitness}

%
%
%

\subsubsection{Fourier Analysis}
For prime $p$, let $\mathbb{F}_p$ be the finite field of $p$ elements. We define the Fourier transformation on the group $(\mathbb{F}_p^n,+)$.
\begin{definition}[Fourier transform]
    Let $f:\mathbb{F}_p^N \to \mathbb{R}$ be a function. Then, the Fourier coefficient of $f$, denoted by $\hat{f}$, is also a $\mathbb{F}_p^N \to \mathbb{R}$ function, defined as
    $\hat{f}(s)=\frac{1}{p^N}\sum_{x\in \mathbb{F}_p^N} \omega^{-\langle s,x\rangle} f(x),$
    where $\omega=e^{2\pi i/p}$. 
\end{definition}


\begin{fact} \label{pro:inverse_fourier}  $ f = p^{N}\Big(\widehat{\big(\hat{f}\big)^\ast}\Big)^\ast $.
\end{fact}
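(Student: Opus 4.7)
The plan is to verify the identity by a direct, two-line computation: unwind the two applications of $\widehat{\,\cdot\,}$ and collapse the resulting double sum using the orthogonality of additive characters on $(\mathbb{F}_p^N,+)$. The two complex conjugations appear because the Fourier transform is defined one-sidedly with $\omega^{-\langle s,x\rangle}$, so applying $\widehat{\,\cdot\,}$ twice without conjugation does not return $f$; the conjugation effectively flips the sign of the exponent in the inner application.

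First I would compute $(\hat f)^{\ast}$. Since $f$ is real-valued and $\omega^{\ast}=\omega^{-1}$, one immediately gets $(\hat f)^{\ast}(s) = p^{-N}\sum_{x}\omega^{\langle s,x\rangle}f(x)$. Next I would substitute this into the definition of $\widehat{\,\cdot\,}$ and swap the order of summation, obtaining
$$\widehat{(\hat f)^{\ast}}(y) \;=\; \frac{1}{p^{2N}}\sum_{x\in\mathbb{F}_p^N} f(x)\sum_{s\in\mathbb{F}_p^N}\omega^{\langle s,\,x-y\rangle}.$$

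The key step is the character orthogonality relation: for fixed $z\in\mathbb{F}_p^N$, the map $s\mapsto \omega^{\langle s,z\rangle}$ is an additive character of $(\mathbb{F}_p^N,+)$, trivial exactly when $z=0$. Hence the inner sum equals $p^N$ if $x=y$ and $0$ otherwise, so the double sum collapses to $p^{-N}f(y)$. Since $f$ is real this expression is its own conjugate, and multiplying by $p^N$ yields the claimed identity.

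There is really no serious obstacle; the statement is just Fourier inversion on the finite abelian group $(\mathbb{F}_p^N,+)$, rewritten with conjugations so that one can express $f$ using two applications of the one-sided transform $\widehat{\,\cdot\,}$ defined in the paper. The only small point requiring a one-line justification is the character orthogonality, which follows from summing the geometric progression $\sum_{t=0}^{p-1}\omega^{tc}=0$ for $c\not\equiv 0\pmod p$ coordinate-wise.
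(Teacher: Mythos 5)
Your proof is correct: it is the standard Fourier inversion argument on $(\mathbb{F}_p^N,+)$ via orthogonality of additive characters, with the conjugations exactly compensating for the one-sided sign convention $\omega^{-\langle s,x\rangle}$, and the computation checks out. The paper states this fact without proof, treating it as standard, so your argument is precisely the one the authors implicitly rely on.
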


\subsubsection{Approximate Norm and Dual Norm}

The $\ell_p$ norm of a vector $v\in \mathbb{R}^n$ is defined by $\|v\|_p:= \left( \sum_{i=1}^n |v_i|^p \right) ^ {1/p}$ and the $\ell_\infty$ norm by $\|v\|_\infty:= \max_{i=1}^n |v_i|$. The trace norm of an $n\times n$ matrix $F$, denoted by $\|F\|_\tr$, is defined as $\|F\|_\tr := \sum_i \sigma_i $, where $\sigma_1,\cdots,\sigma_n$ are the singular values of $F$.

The matrix rank and some matrix norms can give lower bounds for deterministic communication complexity. For randomized lower bounds, we need the notions of approximate rank and norms.
\begin{definition} [approximate norm]
    Let $\rho: \mathbb{R}^X \mapsto \mathbb{R}$ be an arbitrary norm and $f: X\mapsto \mathbb{R}$ a partial sign function. The $\epsilon$-approximate $\rho$ norm of $f$, denoted by $\rho^\epsilon(f)$, is defined as
    $ \rho^\epsilon(f) = \inf_\phi \rho(\phi) $,
where the infimum is taken over all functions $\phi:X\mapsto \mathbb{R}$ that satisfy
    $$  \phi(x) \in
        \begin{cases}
            [1-\epsilon, 1+\epsilon] & \text{if $f(x)=1$;}  \\
            [-1-\epsilon, -1+\epsilon] & \text{if $f(x)=-1$;} \\
            [-1-\epsilon, 1+\epsilon] & \text{if $f(x)$ is undefined.} \\
        \end{cases}
    $$
\end{definition}

The following lemma shows that the approximate trace norm gives lower bounds on quantum communication complexity, as well as on randomized protocols with public coins. The following lemma is a result in \cite{lee2009lower} combined with Neumann's argument for converting a public-coin protocol into a private-coin one.

\begin{lemma} \label{cor:approx_trace_norm_method}
For $\delta > 0$ such that $1/(1-2\epsilon) \leq 1+ \delta$, it holds that
\[
\R^{\pub}_\delta(f) \geq \Omega\left( \log \frac{(\|F\|_\tr^\epsilon)^2}{\size(F)} \right) - O(\log n + \log\frac{1}{\delta}).
\]
\end{lemma}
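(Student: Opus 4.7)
The plan is to combine two ingredients: (i) the approximate trace norm lower bound for private-coin randomized communication complexity due to Lee and Shraibman, and (ii) Newman's theorem for converting between public-coin and private-coin protocols.

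First, I would recall the Lee--Shraibman bound in its standard form: for any partial sign function $f$ with communication matrix $F$, any $\delta$-error private-coin protocol of cost $c$ yields, by averaging the acceptance probabilities over the private randomness, a real matrix $A$ with entries in $[0,1]$ such that $A$ sign-represents $F$ on $\dom(f)$ with bias at least $1-2\delta$ and has trace norm at most $2^{c}\sqrt{\size(F)}$ (up to constants). Rescaling $A$ by $1/(1-2\delta)$ produces a matrix $\phi$ that $\epsilon$-approximates the sign function $F$ in the $\ell_\infty$ sense precisely when $1/(1-2\epsilon)\leq 1+\delta$, which is exactly the hypothesis of the lemma. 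Its trace norm is at least $\|F\|_{\tr}^{\epsilon}$ by definition of the approximate norm. Comparing the two bounds on $\|\phi\|_{\tr}$ and solving for $c$ yields
\[
\R^{\text{priv}}_{\delta}(f) \;\geq\; \Omega\!\left(\log \frac{(\|F\|_{\tr}^{\epsilon})^2}{\size(F)}\right) - O(1).
\]

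Second, I would invoke Newman's theorem to pass from the private-coin to the public-coin model. Newman's theorem states that any public-coin protocol with cost $c$ and error $\delta$ can be simulated by a private-coin protocol with error at most $2\delta$ and cost $c + O(\log n + \log(1/\delta))$. Rearranging gives
\[
\R^{\pub}_{\delta}(f) \;\geq\; \R^{\text{priv}}_{2\delta}(f) - O(\log n + \log(1/\delta)).
\]
Chaining the two displayed inequalities (and absorbing the change from $\delta$ to $2\delta$, which only affects constants in the $\Omega(\cdot)$) yields the claimed bound.

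The main potential obstacle is parameter bookkeeping between $\epsilon$ and $\delta$: one must ensure that the approximation parameter $\epsilon$ derived from rescaling the acceptance matrix is consistent with the hypothesis $1/(1-2\epsilon)\leq 1+\delta$, and that the factor of $2$ loss in Newman's reduction is compatible with the regime of $\delta$ considered. Both are straightforward once set up carefully. Note that in \cite{lee2009lower} the bound is typically stated in the quantum model with entanglement, which dominates randomized complexity, so the private-coin bound is implicit; the only new content here is the explicit additive Newman correction that appears in the statement.
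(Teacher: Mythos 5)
Your proposal matches the paper's own justification of this lemma, which is simply the approximate trace norm bound of Lee--Shraibman \cite{lee2009lower} for private-coin protocols combined with Newman's theorem to pass from public to private coins at an additive $O(\log n + \log\frac{1}{\delta})$ cost, so the approach is essentially identical and correct. One bookkeeping remark: rescaling the acceptance-bias matrix by $1/(1-2\delta)$ yields an $\epsilon$-approximation under the condition $1/(1-2\delta)\leq 1+\epsilon$ (consistent with the application $\delta=1/10$, $\epsilon=1/4$), so the hypothesis as printed in the lemma appears to have $\epsilon$ and $\delta$ swapped and your write-up inherits that swap, but this does not affect the substance of the argument.
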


The approximate norms are minimization problems. We will consider the dual problems, which are maximization problems.

\begin{definition}
    Let $\rho$ be an arbitrary norm on $\mathbb R^n$. The dual norm of $\rho$, denoted by $\rho^*$, is defined as
    $$ \rho^*(v) = \sup_{u:\rho(u)\leq 1} \langle v,u\rangle. $$
\end{definition}

The following lemma characterizes the approximate norm as a maximization problem so that we can prove \emph{lower} bounds more easily.
\begin{lemma}[\cite{sherstov2012strong}] \label{lem:dual_and_approx}
    Let $f$ be a partial sign function and $\rho$ an arbitrary norm. Then
    $$ \rho^\epsilon(f) = \sup_{\psi\neq 0} \frac{\langle f, \psi \cdot \dom(f) \rangle - \|\psi \cdot \overline{\dom(f)} \|_1 - \epsilon \|\psi\|_1}{\rho^*(\psi)},\quad \epsilon > 0. $$
where
    $$ \dom(f)(x) =
        \begin{cases}
                     1 & \text{if $f(x)$ is defined,} \\
                     0 & \text{otherwise,}
        \end{cases} $$
$\overline{\dom(f)}(x)=1-\dom(f)(x)$, and $(\psi\cdot\varphi)(x)=\psi(x)\varphi(x)$.
\end{lemma}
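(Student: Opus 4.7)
The plan is to prove the identity by standard finite-dimensional LP/norm duality: write $\rho^\epsilon(f)$ as a constrained minimization problem, express $\rho$ via its bidual $\rho(\phi) = \sup_{\rho^*(\psi)\le 1}\langle \phi,\psi\rangle$, swap inf and sup using a minimax theorem, and compute the resulting inner infimum explicitly coordinate by coordinate.

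First I would rewrite $\rho^\epsilon(f) = \inf_{\phi\in C}\rho(\phi)$, where $C\subseteq \mathbb{R}^X$ is the set of functions satisfying the three interval constraints from the definition of the approximate norm. Since $X$ is finite and each coordinate of $\phi$ lies in a bounded closed interval ($[1-\epsilon,1+\epsilon]$, $[-1-\epsilon,-1+\epsilon]$, or $[-1-\epsilon,1+\epsilon]$), the set $C$ is a compact convex product of intervals. Using the finite-dimensional duality identity $\rho(\phi)=\sup_{\rho^*(\psi)\le 1}\langle \phi,\psi\rangle$, I rewrite
\[
\rho^\epsilon(f)=\inf_{\phi\in C}\,\sup_{\rho^*(\psi)\le 1}\langle \phi,\psi\rangle.
\]
Because the objective is bilinear, $C$ is compact convex, and the dual unit ball $\{\psi:\rho^*(\psi)\le 1\}$ is compact convex, Sion's minimax theorem lets me interchange the two operations to obtain $\rho^\epsilon(f)=\sup_{\rho^*(\psi)\le 1}\inf_{\phi\in C}\langle \phi,\psi\rangle$.

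Next I would evaluate the inner infimum explicitly. Since $C$ is a product of intervals, $\inf_{\phi\in C}\langle\phi,\psi\rangle$ decomposes as a sum over $x\in X$, and each coordinate minimization picks the endpoint of the appropriate interval according to the sign of $\psi(x)$. A short case analysis gives, per coordinate: $f(x)\psi(x)-\epsilon|\psi(x)|$ when $f(x)$ is defined, and $-|\psi(x)|-\epsilon|\psi(x)|$ when $f(x)$ is undefined. Summing yields
\[
\inf_{\phi\in C}\langle\phi,\psi\rangle=\langle f,\psi\cdot\dom(f)\rangle-\|\psi\cdot\overline{\dom(f)}\|_1-\epsilon\|\psi\|_1.
\]
Plugging this into the swapped expression and then removing the constraint $\rho^*(\psi)\le 1$ by homogeneity (the numerator and $\rho^*(\psi)$ both scale linearly in $\psi$) gives the claimed ratio formula, with the supremum taken over all nonzero $\psi$.

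The main obstacle is really the bookkeeping in the case analysis for the inner infimum, together with verifying the hypotheses of the minimax swap; the swap is the only non-elementary step, and it goes through cleanly here because both relevant sets are compact and convex in finite dimensions. A minor subtlety is checking that the $\epsilon$-penalty term assembled from the two coordinate cases combines into a single $\epsilon\|\psi\|_1$ (rather than being split between $\dom(f)$ and its complement), which follows once one notices that the undefined case contributes a full $(1+\epsilon)|\psi(x)|$ and the defined case contributes $\epsilon|\psi(x)|$ beyond the main $f(x)\psi(x)$ term.
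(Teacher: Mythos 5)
Your argument is correct. Note that the paper itself does not prove this lemma at all---it is imported verbatim from Sherstov \cite{sherstov2012strong}---and your route (rewrite $\rho^\epsilon(f)$ as $\inf_{\phi\in C}\sup_{\rho^*(\psi)\le 1}\langle\phi,\psi\rangle$ via the bidual, swap by Sion/minimax using compactness and convexity of both sets, then evaluate the inner infimum coordinatewise over the product of intervals) is essentially the standard duality proof underlying the cited result, and your coordinate case analysis correctly assembles into $\langle f,\psi\cdot\dom(f)\rangle-\|\psi\cdot\overline{\dom(f)}\|_1-\epsilon\|\psi\|_1$. Two small points of hygiene: the numerator is only \emph{positively} homogeneous in $\psi$ (because of the absolute values), not linear, which is all the rescaling step needs; and passing from the supremum over the unit ball $\{\rho^*(\psi)\le 1\}$ to the ratio over all $\psi\neq 0$ implicitly uses that the supremum is nonnegative, which holds whenever $\dom(f)\neq\emptyset$ and $\epsilon<1$ (the only regime in which the lemma is applied), since then $\rho^\epsilon(f)>0$.
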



We call a feasible solution in the dual problem the witness of the original problem. In particular, in Lemma~\ref{lem:dual_and_approx}, the function $\psi$ is the witness. Any $\psi$ gives a lower bound for $\rho^\epsilon(f)$. It is difficult to find a useful witness. The first choice that comes to mind is to choose $\psi=f\cdot \dom(f)$, because it makes $\langle f, \psi \cdot \dom(f) \rangle$ large and  $\|\overline{\dom(f)} \|_1$ small. This is the discrepancy method. We use a different choice: $\psi= \widehat{\big(f\cdot \dom(f)\big)}$. We call it the \emph{Fourier witness method}, introduced in \cite{sun2012randomized}, but used here for partial functions. 

\begin{definition} [approximate Fourier $p$-norm]
    Let $f:\mathbb{F}_p^N\mapsto \mathbb R$ be a function and $p\geq 1$. The Fourier $p$-norm of $f$, denoted by $\|\hat{f}\|_p$, is the $p$-norm of $\hat{f}$. Furthermore, if $f$ is a sign function, the approximate Fourier $p$-norm of $f$, denoted by $\|\hat{f}\|_p^\epsilon$, is the approximate $\|\hat{\cdot}\|_p$ norm of $f$.
\end{definition}


\begin{fact} \label{lem:dual_fourier_l1}
    The dual norm of $\|\cdot\|_1$ is $\|\cdot\|_\infty$.
    The dual norm of $\|\hat{\cdot}\|_1$ is $p^N\|\hat{\cdot}\|_\infty$.
\end{fact}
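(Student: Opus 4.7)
The first statement is the textbook H\"older duality. By H\"older's inequality $|\langle u,v\rangle|\le\|u\|_1\|v\|_\infty$, and placing all of the $\ell_1$-mass of $u$ on a coordinate achieving $\|v\|_\infty$ (with the matching sign) attains equality, so $\|\cdot\|_1^{\ast}=\|\cdot\|_\infty$. I would record this in a single line.

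For the second statement, my plan is to reduce it to the first via Fourier inversion. From Fact~\ref{pro:inverse_fourier} (or directly from the orthogonality $\sum_s\omega^{\langle s,x-y\rangle}=p^N\mathbf{1}[x=y]$) one reads off $u(x)=\sum_s\hat u(s)\omega^{\langle s,x\rangle}$. Substituting this into the pairing and swapping sums,
\[
\langle v,u\rangle = \sum_x v(x)\sum_s \hat u(s)\omega^{\langle s,x\rangle} = \sum_s \hat u(s)\sum_x v(x)\omega^{\langle s,x\rangle} = p^N\sum_s \hat u(s)\hat v(-s),
\]
because the inner sum equals $p^N\hat v(-s)$ by the definition of $\hat v$. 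Since $v$ is real-valued, $\hat v$ is conjugate-symmetric, so $|\hat v(-s)|=|\hat v(s)|$. Applying the scalar $\ell_1/\ell_\infty$ duality of the first part to the complex sequences $\hat u$ and $\hat v(-\cdot)$, I obtain, for any $u$ with $\|\hat u\|_1\le 1$,
\[
|\langle v,u\rangle|\le p^N\|\hat u\|_1\cdot\max_s|\hat v(-s)| \le p^N\|\hat v\|_\infty,
\]
which is the upper bound on the dual norm.

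For the matching lower bound I must exhibit a real-valued $u$ with $\|\hat u\|_1=1$ attaining $\langle v,u\rangle=p^N\|\hat v\|_\infty$. Pick $s^{\ast}\in\argmax_s|\hat v(s)|$. If $2s^{\ast}=0$, then $\hat v(s^{\ast})$ is real and I take $\hat u$ supported on $s^{\ast}$ with value $\operatorname{sgn}(\hat v(s^{\ast}))$; this is conjugate-symmetric, so $u$ is real. Otherwise, write $\hat v(-s^{\ast})=|\hat v(s^{\ast})|e^{i\theta}$ and set $\hat u(s^{\ast})=\tfrac{1}{2}e^{-i\theta}$, $\hat u(-s^{\ast})=\tfrac{1}{2}e^{i\theta}$, with zero elsewhere; this $\hat u$ is conjugate-symmetric (so $u$ is real), $\|\hat u\|_1=1$, and each of the two nonzero terms contributes $\tfrac{1}{2}|\hat v(s^{\ast})|$, for a total of $p^N\|\hat v\|_\infty$. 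The only real obstacle is this bookkeeping around conjugate symmetry: without the reality constraint on $u$, the extremal witness would be a single frequency spike and the whole proof would collapse to a one-line application of the first part.
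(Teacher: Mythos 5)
Your proof is correct. The paper states this as a Fact with no accompanying proof (it is treated as standard background for the Fourier witness method), so there is no in-paper argument to compare against; your derivation via the inversion formula, the pairing $\langle v,u\rangle = p^N\sum_s \hat u(s)\hat v(-s)$, and scalar $\ell_1/\ell_\infty$ duality is exactly the right way to fill the gap. You also correctly identify and handle the one genuinely non-routine point: since the norms in the paper are defined on $\mathbb{R}^X$, the extremal $u$ must be real-valued, and your conjugate-symmetric two-spike witness (collapsing to a single real spike when $2s^{\ast}=0$, e.g.\ for $p=2$) attains the bound $p^N\|\hat v\|_\infty$ exactly, so both inequalities of the duality hold.
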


The Fourier coefficients of a plus composed function are related to the singular values of the associated matrix, as shown by the following lemma, whose proof is postponed to Appendix~\ref{sec:eigenvalue_and_fourier}. Applied to approximate trace norm, it also builds a bridge between the approximate trace norm and the approximate Fourier $\ell_1$-norm for a plus composed function. Similar results and additional background can be found in \cite{lee2009lower,sun2012randomized}.

\begin{lemma}\label{lem:eigenvalue_and_fourier}
Suppose that $g:\mathbb{F}_p^N\mapsto \mathbb{R}$ is a function, and $f$ is a plus-composed function $f(x,y)=g(x+y)$. Let $F$ be the associated matrix of $f$. Then the singular values of $F$ are $p^N$ times the modulus of the Fourier coefficients of $g$, i.e.
$ \sigma_F = p^N\cdot |\hat{g}|$,
where $\sigma_F$ are the singular values of $F$ and $|\hat{g}|(s)=|\hat{g}(s)|$. As a consequence,  $\|F\|_\tr^\epsilon = p^N \cdot \|\hat{g}\|_1^\epsilon$.
\end{lemma}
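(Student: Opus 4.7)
The plan is to diagonalize the plus-composed matrix $F$ using the Fourier transform on $\mathbb{F}_p^N$ and to read off its singular values from that diagonalization. By Fourier inversion (a direct consequence of Fact~\ref{pro:inverse_fourier}), we may write $g(x+y) = \sum_s \hat g(s)\,\omega^{\langle s,x\rangle}\omega^{\langle s,y\rangle}$. Let $U$ be the $p^N\times p^N$ character matrix with entries $U_{x,s} = p^{-N/2}\omega^{\langle s,x\rangle}$, which is unitary by the orthogonality of characters on $\mathbb{F}_p^N$. Reading the above identity as a matrix product yields the factorization $F = U D U^T$, where $D = p^N\operatorname{diag}(\hat g(s))$.

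To turn this into a singular value decomposition, I would factor out the (possibly nontrivial) phases of the complex numbers $\hat g(s)$. Write $\hat g(s) = |\hat g(s)|\,e^{i\theta_s}$, and set $\Sigma = p^N\operatorname{diag}(|\hat g(s)|)$ and $T = \operatorname{diag}(e^{i\theta_s})$, so $D = \Sigma T$ and $F = U\Sigma(T U^T)$. Setting $V^* := T U^T$, a direct calculation using $U U^* = I$ (equivalently $\overline U\,U^T = I$) and $T T^* = I$ gives $V V^* = \overline U\,U^T = I$, so $V$ is unitary. Hence $F = U\Sigma V^*$ is a valid SVD, and the singular values of $F$ are the diagonal entries of $\Sigma$, i.e.\ $\sigma_F = p^N |\hat g|$ as multisets.

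For the consequence on the approximate trace norm, I would show that the infimum defining $\|F\|_\tr^\epsilon$ is attained by a plus-composed matrix, so that the claim reduces to the first part applied to the corresponding $\phi$. Given any matrix $\Phi$ meeting the approximation condition for $F$, symmetrize by setting $\Phi'_{x,y} := p^{-N}\sum_{t\in\mathbb{F}_p^N}\Phi_{x+t,\,y-t}$. Because the approximation condition depends only on $f(x,y)=g(x+y)$ and every summand preserves $x+y$, each $\Phi_{x+t,y-t}$ still satisfies the condition, and therefore so does the average $\Phi'$; moreover $\Phi'_{x,y}$ depends only on $x+y$, so $\Phi'$ is plus-composed. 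Each summand is a conjugation of $\Phi$ by permutation matrices (a unitary operation), so $\|\Phi_{x+t,y-t}\|_\tr = \|\Phi\|_\tr$, and the triangle inequality yields $\|\Phi'\|_\tr \leq \|\Phi\|_\tr$. Combining this symmetrization with the first part of the lemma gives $\|F\|_\tr^\epsilon = p^N\|\hat g\|_1^\epsilon$.

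The main subtlety lies in the SVD step: because $\hat g(s)$ is generally complex even when $g$ is real, the natural factorization $F = UDU^T$ is not an SVD on the nose, and one must introduce the auxiliary unitary diagonal $T$ to absorb the phases. The symmetrization used for the consequence is routine once one observes that the approximation condition depends on $(x,y)$ only through $x+y$.
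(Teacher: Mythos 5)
Your proof is correct, and the first part is essentially the paper's argument: both diagonalize $F$ by the character matrix $U$. The cosmetic difference is that you build an explicit SVD $F=U\Sigma V^*$ by peeling off the phases of $\hat g$ into a diagonal unitary $T$, whereas the paper avoids any phase bookkeeping by computing $F^\dagger F = p^{2N}\,U\,\diag(|\hat g|^2)\,U^\dagger$ and reading the singular values off as square roots of the eigenvalues; the two routes are interchangeable, and your check that $V^*=TU^T$ is unitary is the only extra step your version needs. Where you genuinely go beyond the paper is the ``consequence'': the appendix proof stops at the singular-value identity and leaves $\|F\|_\tr^\epsilon = p^N\|\hat g\|_1^\epsilon$ implicit (deferring to the analogous statements in the cited works of Lee--Shraibman and Sun--Wang). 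Your symmetrization $\Phi'_{x,y}=p^{-N}\sum_t \Phi_{x+t,\,y-t}$ is the standard and correct way to supply the missing ``$\geq$'' direction: each shift is a row/column permutation (so trace norm is preserved and, by the triangle inequality, does not increase under averaging), the entrywise constraints depend on $(x,y)$ only through $x+y$ and are convex, and the average is plus-composed, so the first part applies to it; together with the trivial ``$\leq$'' direction (a plus-composed approximant $\phi$ of $g$ yields a matrix approximant of $F$ of trace norm $p^N\|\hat\phi\|_1$) this gives the stated equality.
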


\subsection{\scrank$_{n,n-1}$}\label{sec:rank_proof}
For a matrix $x\in \mathbb{F}_p^{n\times n}$, we define $\theta(x)=1$ if $x$ is of full rank and $\theta(x)=0$ otherwise. We shall use $\theta$ as the witness in the proof of \scrank$'_{n,n-1}$. The same function $\theta$ has been used to prove a communication complexity lower bound for the matrix singularity problem in~\cite{sun2012randomized}. 
\begin{theorem}\label{thm:rank_lb}
$\R^{\pub}_{1/10}(\scrank_{n,n-1})=\Omega(n^2 \log p)$.
\end{theorem}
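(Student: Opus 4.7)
The plan is to apply the Fourier witness method through the approximate trace-norm lower bound. First I would pass to the additive-split model, where $f(x,y)=g(x+y)$ is plus-composed with $g:\F_p^{n^2}\to\{-1,1,\ast\}$ defined by $g(z)=1$ if $\rk(z)=n$, $g(z)=-1$ if $\rk(z)=n-1$, and $g$ undefined otherwise. Lemma~\ref{lem:eigenvalue_and_fourier} converts the approximate trace norm of the communication matrix into an approximate Fourier $\ell_{1}$-norm, $\|F\|_{\tr}^{\epsilon}=p^{n^{2}}\|\hat g\|_{1}^{\epsilon}$; combined with $\size(F)=p^{2n^{2}}$ and Lemma~\ref{cor:approx_trace_norm_method}, it suffices to show $\|\hat g\|_{1}^{\epsilon}\geq p^{\Omega(n^{2})}$ in order to conclude the desired $\Omega(n^{2}\log p)$ lower bound.

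I would then invoke the dual formulation of Lemma~\ref{lem:dual_and_approx}, whose denominator is $p^{n^{2}}\|\hat\psi\|_{\infty}$ by Fact~\ref{lem:dual_fourier_l1}, with the witness $\psi=\theta$, the indicator of $\mathrm{GL}_n(\F_p)$. The feature that makes this witness effective in the promise setting is that $\theta$ is supported inside $\dom(g)$---every full-rank matrix has rank $n$---so the partial-function penalty $\|\theta\cdot\overline{\dom(g)}\|_{1}$ vanishes and both $\langle g,\theta\cdot\dom(g)\rangle$ and $\|\theta\|_{1}$ equal $N_{n}:=|\mathrm{GL}_n(\F_p)|=p^{n^{2}}\prod_{j=1}^{n}(1-p^{-j})$. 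Since the $s=0$ Fourier coefficient equals the density $\alpha=N_{n}/p^{n^{2}}=\Theta(1)$, I would work with the centered witness $\theta-\alpha$ so that $\hat\psi(0)=0$; this introduces a promise term of order $\alpha(p^{n^{2}}-N_{n}-N_{n-1})$, but the remaining numerator $\Theta(\alpha N_{n-1})$ is still $\Omega(N_{n}/p)$ (using $N_{n-1}=\Theta(N_{n}/p)$) and comfortably absorbs the $\epsilon\|\psi\|_{1}$ slack.

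The technical crux is then the exponential-sum estimate
\[
\max_{s\neq 0}\;\left|\frac{1}{p^{n^{2}}}\sum_{x\in \mathrm{GL}_n(\F_p)}\omega^{-\tr(s^{T}x)}\right|\;\leq\;p^{-\Omega(n^{2})},
\]
which is precisely the bound used by Sun and Wang~\cite{sun2012randomized} for their matrix singularity lower bound. It is proved by reducing to $s=\diag(I_r,0)$ of each rank $r\geq 1$ via $\mathrm{GL}_n\times\mathrm{GL}_n$-invariance of $\hat\theta$, then evaluating the resulting character sum over invertible matrices via inclusion-exclusion and Gauss-sum identities. The main obstacle is this $\mathrm{GL}_n$-character-sum estimate together with the bookkeeping to confirm that the centering step does not erode the main term---the new wrinkle relative to \cite{sun2012randomized}, which handled a total rather than partial function. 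Once both pieces are in place, one obtains $\|\hat g\|_{1}^{\epsilon}\geq p^{\Omega(n^{2})}$, hence $\|F\|_{\tr}^{\epsilon}\geq p^{n^{2}+\Omega(n^{2})}$, and Lemma~\ref{cor:approx_trace_norm_method} closes the argument.
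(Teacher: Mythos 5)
Your setup (additive split, the partial sign function $g$, \Cref{lem:eigenvalue_and_fourier}, \Cref{cor:approx_trace_norm_method}, and the dual formulation of \Cref{lem:dual_and_approx} with \Cref{lem:dual_fourier_l1}) matches the paper, but the witness you choose breaks the argument. Your ``technical crux'' estimate $\max_{s\neq 0}\,|\hat\theta(s)|\le p^{-\Omega(n^2)}$ is false: by the exact formula (the paper's \Cref{lem:fouriercoef}, taken from Sun--Wang), $\hat\theta(s)=(-1)^{r}p^{-n(n+1)/2}\prod_{k=1}^{n-r}(p^k-1)$ with $r=\rankp(s)$, so at any rank-one frequency $|\hat\theta(s)|=p^{-n}\prod_{k=1}^{n-1}(1-p^{-k})=\Theta(p^{-n})$. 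Consequently, for your centered witness $\psi=\theta-\alpha$ the dual denominator is $p^{n^2}\|\hat\psi\|_\infty=\Theta(p^{n^2-n})$, while the numerator is at most $O(\|\psi\|_1)=O\bigl(N_n(1-\alpha)\bigr)=O(p^{n^2}\min\{1,1/p\})$; the ratio is $p^{O(n)}$, which through \Cref{cor:approx_trace_norm_method} yields only an $\Omega(n\log p)$ bound, not $\Omega(n^2\log p)$. This is not a bookkeeping issue with the centering: no normalization of the indicator of $\mathrm{GL}_n(\F_p)$ can beat the $\Theta(p^{-n})$ rank-one Fourier mass in the $\ell_\infty$ denominator. (Sun--Wang also do not use, and could not use, a uniform $p^{-\Omega(n^2)}$ bound on the nonzero coefficients.)

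The paper's fix is precisely what ``Fourier witness'' means here: take the witness to be the Fourier transform itself, $\psi=(-1)^n\hat\theta$, rather than $\theta$. Then $\hat\psi=\widehat{\hat\theta}$ is, up to reflection, $p^{-n^2}\theta$, so $\|\hat\psi\|_\infty=p^{-n^2}$ and the denominator collapses to $1$; and the sign pattern $(-1)^{\rankp(s)}$ of $\hat\theta$ aligns with $g$ on both promise classes, so $\langle g,\psi\cdot\dom(g)\rangle=\sum_{\rankp(x)=n}|\hat\theta(x)|+\sum_{\rankp(x)=n-1}|\hat\theta(x)|$. The genuinely delicate step (Appendix~\ref{app:proofs}) is then showing this main term exceeds $(1+\epsilon)\|\hat\theta\|_1$ plus the penalty from matrices of rank $\le n-2$ by a positive constant factor, which gives $\|\hat g\|_1^{\epsilon}=\Omega(p^{n(n-3)/2})$ and hence the stated $\Omega(n^2\log p)$ bound. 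As written, your proposal has a genuine gap and does not establish \Cref{thm:rank_lb}.
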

\begin{proof}
Suppose that $\Pi$ is a public-coin protocol for $\scrank'_{n,n-1}$ with error probability $\leq 1/10$. Then Alice and Bob can build a public-coin protocol $\Pi'$ as follows. They use the public coins to choose a random matrix $r$ and run $\Pi$ on input $(x-r,y+r)$. It is easy to see that $\Pi'$ has error probability $\leq 1/10$ and $\cost(\Pi')=\cost(\Pi)$. Observe that the distribution of $\Pi'(x,y)$ is identical to the that of $\Pi'(a,b)$ whenever $x+y=a+b$. 

Define the partial sign function 
$g(x) = 1$ if $\rankp(x) = n$, $g(x) = -1$ if $\rankp(x) = n-1$, and
$g(x)$ is undefined otherwise.
Let 
$f(x,y)$ be the expected output of $\Pi'(x,y)$. Then $f$ is a plus-composed function. By the correctness of $\Pi$, we know that $f(x,y)=g(x+y)$ whenever $g(x+y)$ is defined. We claim that $\|g\|_1^\epsilon = \Omega(p^{n(n-3)/2})$ for $\epsilon = 1/4$, following  Lemma~\ref{lem:dual_and_approx} (applied with witness $\theta$ as in the paragraph before the theorem statement) and Fact~\ref{lem:dual_fourier_l1}. See Appendix~\ref{app:proofs} for details. 
Finally, it follows from Lemma~\ref{cor:approx_trace_norm_method} that
\begin{multline*}
\R^{\pub}_{1/10}(f)
=\Omega\left( \log \frac{\|F\|_\tr^{1/4}}{\sqrt{\size(F)}} \right) - O(\log n)
=\Omega\left( \log \frac{p^{n^2}\|\hat{g}\|_1^\epsilon}{\sqrt{\size(F)}} \right)- O(\log n)\\
=\Omega\left( \log \frac{p^{n^2}\cdot 0.4 p^{n(n-3)/2}}{p^{n^2}} \right)  - O(\log n)
=\Omega(n^2\log p).\qedhere
\end{multline*}
\end{proof}

The lower bound for the multi-player $\textsc{Rank}$ problem is an immediate corollary of Theorem~\ref{thm:multiparty}.
\begin{corollary} $R_{1/40}^{s,\pub}(\textsc{Rank}_{n,n-1}) = \Omega(sR_{1/10}^{\pub}(\textsc{Rank}_{n,n-1})) = \Omega(s n^2\log p)$.
\end{corollary}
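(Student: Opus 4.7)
The plan is to derive this corollary as a direct instantiation of Theorem~\ref{thm:multiparty} together with the two-player lower bound from Theorem~\ref{thm:rank_lb}. Example~1 already exhibits a uniformizing family for $\textsc{Rank}_{n,n-1}$, with the partition $G_1\cup G_2$ of $M_n(\F_p)$ into rank-$n$ and rank-$(n-1)$ matrices and hence $|I(f)|=2$, so the hypothesis of Theorem~\ref{thm:multiparty} is in place.

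Setting $\delta = 1/40$, I would invoke Theorem~\ref{thm:multiparty}. Since $2|I(f)|\delta = 2\cdot 2\cdot \tfrac{1}{40} = \tfrac{1}{10}$, the theorem yields
\[
R^{s,\pub}_{1/40}(\textsc{Rank}_{n,n-1}) \;\geq\; \tfrac{s}{40}\cdot R^{\pub}_{1/10}(\textsc{Rank}_{n,n-1}),
\]
which is precisely the first claimed bound $\Omega\!\bigl(s\cdot R^{\pub}_{1/10}(\textsc{Rank}_{n,n-1})\bigr)$. Combining with Theorem~\ref{thm:rank_lb}, which asserts $R^{\pub}_{1/10}(\textsc{Rank}_{n,n-1}) = \Omega(n^2\log p)$, immediately gives the second bound $R^{s,\pub}_{1/40}(\textsc{Rank}_{n,n-1}) = \Omega(sn^2\log p)$.

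There is no real obstacle here; the only subtlety is calibrating the error parameter. The multi-party reduction of Theorem~\ref{thm:multiparty} costs a factor of $\delta$ in front while amplifying the two-player error by $2|I(f)|$. The choice $\delta = 1/40$ is engineered so that the amplified error $2|I(f)|\delta = 1/10$ lands exactly in the regime for which Theorem~\ref{thm:rank_lb} was proved, while the prefactor $\delta s = s/40$ is still $\Omega(s)$. Any sufficiently small constant $\delta$ would work equally well; the specific values $1/40$ and $1/10$ in the statement are just a convenient matching pair, and no further ingredients beyond the already-established machinery are required.
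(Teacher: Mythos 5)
Your proposal is correct and matches the paper's (implicit) argument exactly: the paper also treats this as an immediate instantiation of Theorem~\ref{thm:multiparty} with the uniformizing family of Example~1 (so $|I(f)|=2$ and $2|I(f)|\cdot\tfrac{1}{40}=\tfrac{1}{10}$), combined with Theorem~\ref{thm:rank_lb}. The parameter calibration you spell out is precisely what makes the corollary "immediate."
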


By padding zeros outside the top-left $k\times k$ submatrix, we obtain a lower bound for $\scrank_{k,k-1}$.
\begin{corollary} $\R^{\pub}_{1/10}(\scrank_{k,k-1})=\Omega(k^2 \log p)$.
\end{corollary}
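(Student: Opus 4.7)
The plan is to prove this by a straightforward padding reduction to the problem already handled in Theorem~\ref{thm:rank_lb}. The key observation is that augmenting a matrix by all-zero rows and columns leaves its rank unchanged, so a $k \times k$ rank-distinguishing instance can be embedded losslessly into an $n \times n$ one for any $n \geq k$, with no communication required.

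Concretely, I would argue as follows. Suppose Alice and Bob are given an instance $(x^A, y^B) \in M_k(\F_p)^2$ of $\scrank_{k, k-1}$, so that $\rk(x^A + y^B) \in \{k-1, k\}$. Each party (without communication) forms the $n \times n$ matrix obtained by placing their input in the top-left $k \times k$ block and filling the remaining $n^2 - k^2$ entries with zeros; the resulting sum agrees with $x^A + y^B$ in the top-left block and is zero elsewhere, hence has the same rank as $x^A + y^B$. Thus any $\delta$-error protocol for $\scrank_{n, k-1}$ solves $\scrank_{k, k-1}$ with error $\delta$ at the same communication cost, giving $\R^{\pub}_{1/10}(\scrank_{k, k-1}) \leq \R^{\pub}_{1/10}(\scrank_{n, k-1})$ for every $n \geq k$.

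Taking $n = k$ and applying Theorem~\ref{thm:rank_lb} at dimension $k$ yields the claimed lower bound $\Omega(k^2 \log p)$ immediately. More broadly, the same padding reduction shows that the $\Omega(k^2 \log p)$ bound continues to hold for $\scrank_{n, k-1}$ on matrices of any larger dimension $n \geq k$, which is likely the use case in which the corollary is invoked later. There is no genuine technical obstacle here: the embedding is a purely local, communication-free operation on each party's input, rank invariance under zero-padding is elementary, and the promise $\rk \in \{k-1, k\}$ is transported verbatim through the reduction.
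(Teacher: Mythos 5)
Your padding argument is exactly the paper's proof: the corollary is obtained by embedding a hard $k\times k$ instance of Theorem~\ref{thm:rank_lb} into the top-left $k\times k$ block and filling the rest with zeros, which preserves the rank and the promise at zero communication cost. Your observation that the real content is the bound for distinguishing rank $k$ from $k-1$ inside larger matrices (rather than the trivial $n=k$ relabeling of the theorem) matches the paper's intent.
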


\subsection{Linear Algebra Problems}
\label{sec:la}
\begin{problem}[\scinv]
Alice and Bob hold two $n\times n$ matrices $x$ and $y$ over $\mathbb{F}_p$, respectively. We promise that $x+y$ is invertible over $\mathbb{F}_p$. They want to determine if the top-left entry of $(x+y)^{-1}$ is zero (output $-1$) or non-zero (output $1$).
\end{problem}

\begin{problem}[\scsls]
Alice and Bob hold two $n\times n$ matrices $x$ and $y$ over $\mathbb{F}_p$, respectively. We promise that $x+y$ is invertible over $\mathbb{F}_p$. $b$ is a parameter of this problem. $t$ is the vector of variables of the linear system $(x+y)t=b$. They want to determine if the first coordinate of $t$ is zero.
\end{problem}
%
%

\begin{theorem}
$\R^{\pub}_{1/20}(\scinv) = \Omega(n^2\log p)$ for $p\geq 3$.
\end{theorem}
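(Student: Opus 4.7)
My plan is to reduce $\scrank_{n-1,n-2}$ to $\scinv_n$ losing only a constant factor in communication; combined with Theorem~\ref{thm:rank_lb} (applied at dimension $n-1$, which by padding zeros gives $\R^{\pub}_{1/10}(\scrank_{n-1,n-2}) = \Omega(n^2 \log p)$), this yields the claimed $\Omega(n^2 \log p)$ lower bound for $\scinv_n$. Given an additive $\scrank_{n-1,n-2}$ instance $z = x + y \in M_{n-1}(\F_p)$ with $\rank z \in \{n-2, n-1\}$, Alice and Bob use public coins to sample $a \in \F_p$ and $b, c \in \F_p^{n-1}$ uniformly and independently, then form $X = \begin{pmatrix} a & c^T \\ b & x \end{pmatrix}$ and $Y = \begin{pmatrix} 0 & 0 \\ 0 & y \end{pmatrix}$, so that $M := X + Y = \begin{pmatrix} a & c^T \\ b & z \end{pmatrix}$. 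By the cofactor formula, $(M^{-1})_{1,1} = \det(z)/\det(M)$, so whenever $M$ happens to be invertible, an $\scinv_n$ call on $(X, Y)$ reveals exactly whether $\rank z = n-1$.

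The heart of the proof will be a short kernel analysis showing $\Pr[M \text{ invertible}] \ge 1 - 2/p$ in both rank cases. If $\rank z = n-1$, Schur complement gives $\det M = \det(z)(a - c^T z^{-1} b)$, vanishing for a unique value of $a$ (probability $1/p$). If $\rank z = n-2$, let $v$ span $\ker z$; splitting $M w = 0$ on whether the first coordinate of $w$ is zero, one finds that $M$ is singular iff $c^T v = 0$ or $b \in \operatorname{Im}(z)$, each a $1/p$-probability event over uniform $c$ and $b$, and the bound follows by union.

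To finish, I will carry out a worst-case analysis over $\Pi$'s behavior on singular inputs, which sit outside the $\scinv$ promise and may therefore be adversarial. Letting $p_\pm := \Pr[\Pi(M) = 1 \mid \rank z = n-1$ or $n-2]$ respectively, one obtains $p_+ \ge (1 - 1/p)(1 - \delta)$ and $p_- \le (2/p - 1/p^2) + \delta$ for any $\delta$-error $\Pi$. The two adversarial choices on singular $M$ are compatible because $z = M_{\hat 1, \hat 1}$ is readable off $M$, making the singular preimages of the two rank cases disjoint subsets of $M_n(\F_p)$. Direct computation shows that for $\delta = 1/20$ and every prime $p \ge 3$ the gap $p_+ - p_-$ is a positive constant (bottoming out at $1/18$ when $p = 3$). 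A standard Chernoff boost by $T = O(1)$ independent repetitions of the reduction with fresh $(a_i, b_i, c_i)$, thresholding the empirical $1$-frequency, then produces a $\scrank_{n-1,n-2}$ protocol of error at most $1/10$ and communication $T \cdot \R^{\pub}_{1/20}(\scinv_n)$, closing the argument.

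The main obstacle will be the sharpness of the estimates at $p = 3$: a weaker singularity bound $C/p$ with $C$ even slightly larger than $2$ would push $p_+ - p_-$ negative at $p = 3$ and derail the Chernoff step. Thus both the precise $2/p - 1/p^2$ singularity estimate and the disjointness-of-preimages observation (which lets the adversary be pessimized independently on the two rank cases) are essential; the same pair of ingredients also explains why the proof only reaches down to $p \ge 3$ rather than $p = 2$.
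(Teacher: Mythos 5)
Your reduction is essentially the paper's own proof: border the $(n-1)\times(n-1)$ rank instance with a public-coin random row and column, use $(M^{-1})_{11}=\det(z)/\det(M)$ so that the $\scinv$ answer reveals whether $\rk(z)=n-1$ or $n-2$, bound the probability that the bordered matrix violates the invertibility promise ($1/p$, resp.\ $2/p-1/p^2$, in the two rank cases), and amplify the resulting constant output-probability gap with $O(1)$ repetitions against the $\Omega((n-1)^2\log p)$ bound from Theorem~\ref{thm:rank_lb}. One small arithmetic note: with your own bounds $p_+\ge(1-\tfrac1p)\cdot\tfrac{19}{20}$ and $p_-\le \tfrac2p-\tfrac1{p^2}+\tfrac1{20}$, the gap at $p=3$ is $1/36$ rather than $1/18$, but it remains a positive constant for every $p\ge 3$, so the argument is unaffected.
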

\begin{proof}
We reduce $\scrank$ to $\scinv$. Let $A = x+y$ and $\tilde A$ be the lower-right $(n-1)\times (n-1)$ block of $A$. Then $A^{-1}_{11} = 0$ iff $\rk(\tilde A) < n-1$. 

Now, suppose that $A$ is an $(n-1)\times (n-1)$ matrix and $\rk(A)\in\{n-1,n-2\}$. We augment $A$ to $A_1$ by appending a random column. With probability $1-1/p$ it holds that $\rk(A_1) = n-1$ when $\rk(A) = n-2$. Now we augment $A_1$ to $A_2$ by appending a random row. With probability $1-1/p$ it holds that $\rk(A_2)=n$ when $\rk(A_1) = n-1$.

Run a protocol for $\scinv$ on $A_2$. We denote the communication complexity of the protocol by $c(n)$. When $\rk(A) = n-1$, if the error probability of the protocol is at most $1/20$, then it outputs $1$ with probability 
$
\alpha \leq \frac{1}{20}\big(1-\frac 1p\big) + \frac 1p,$ 
while when $\rk(A)=n-2$ it outputs $1$ with probability
$\beta \geq \frac{19}{20}\big(1-\frac 1p\big)^2.$
Then
$\beta-\alpha \geq \frac{19}{20}\big(1-\frac 1p\big)^2-\frac{1}{20}\big(1-\frac 1p\big)-\frac{1}{p}\geq \frac{1}{18},\quad p\geq 3,$
which implies that $\Theta(1)$ independent repetitions allow us to solve \scrank\ on $(n-1)\times (n-1)$ matrices, i.e., to distinguish $\rk(A)=n-1$ from $\rk(A)=n-2$, with error probability $\leq 1/20$ and communication complexity $\Theta(c(n)) = \Omega((n-1)^2\log p)$. Therefore $c(n) = \Omega((n-1)^2\log p) = \Omega(n^2\log p)$.
\end{proof}

\begin{theorem}
$\R^{\pub}_{1/20}(\scinv) = \Omega(n^2)$ for $p=2$.
\end{theorem}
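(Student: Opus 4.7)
The plan is to prove the $p=2$ case by going through the quadratic extension $\F_4$, using the standard faithful ring embedding $\rho\colon \F_4 \hookrightarrow M_2(\F_2)$ that sends $a+b\alpha$ (with $\alpha$ a root of $x^2+x+1$) to its $2\times 2$ multiplication matrix in the $\F_2$-basis $\{1,\alpha\}$. An inspection of the preceding theorem's proof shows that it goes through verbatim over any field of size $q \geq 3$: the decisive gap
\[
\beta - \alpha \geq \tfrac{19}{20}(1-1/q)^2 - \tfrac{1}{20}(1-1/q) - 1/q
\]
is strictly positive at $q = 4$ (it equals $79/320$), so only $\Theta(1)$ independent repetitions of the $\scinv$ protocol are needed to solve $\scrank_{m,m-1}$. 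The required base lower bound $\R^{\pub}_{1/10}(\scrank_{m,m-1}) = \Omega(m^2)$ over $\F_4$ comes for free, since an $\F_2$-matrix regarded as an $\F_4$-matrix has the same rank (an $\F_4$-linear dependence among $\F_2$-vectors splits into two $\F_2$-linear dependences via $\{1,\alpha\}$), so Theorem~\ref{thm:rank_lb} transfers directly. Hence $\R^{\pub}_{1/10}(\scinv_m, \F_4) = \Omega(m^2)$.

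Next I would reduce to $\scinv$ over $\F_2$. Given $B = B_A + B_B \in M_m(\F_4)$ with $B$ invertible, Alice and Bob locally form $M_A = \rho(B_A)$ and $M_B = \rho(B_B)$ in $M_{2m}(\F_2)$. Because $\rho$ is a faithful ring homomorphism, $M := \rho(B) = M_A + M_B$ is invertible and $\rho(B)^{-1} = \rho(B^{-1})$; writing $(B^{-1})_{11} = a + b\alpha$, reading off the image of $\rho$ gives $(M^{-1})_{11} = a$ and $(M^{-1})_{12} = b$, so $(B^{-1})_{11} = 0$ iff $a = b = 0$. Alice and Bob extract $a$ with one call to $\scinv_n$ over $\F_2$ on $M$, and extract $b$ with a second call on $PM$, where $P$ is the permutation matrix swapping rows $1$ and $2$; here $((PM)^{-1})_{11} = (M^{-1}P^{-1})_{11} = (M^{-1})_{12} = b$. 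They declare $(B^{-1})_{11} = 0$ iff both calls return zero; a union bound gives combined error at most $2\cdot\tfrac{1}{20} = \tfrac{1}{10}$. Thus
\[
\R^{\pub}_{1/10}(\scinv_m, \F_4) \leq 2\, \R^{\pub}_{1/20}(\scinv_n, \F_2) + O(1),
\]
and combining with the first paragraph (using $n = 2m$) yields $\R^{\pub}_{1/20}(\scinv_n, \F_2) = \Omega(n^2)$.

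The one thing that needs explicit checking is that the preceding theorem's gap argument still succeeds at the slightly larger error rate $1/10$ (we need it here because the union bound across our two queries pushes the $\F_4$-side error to $1/10$ rather than $1/20$); a direct recomputation gives $\tfrac{9}{10}(1-\tfrac{1}{4})^2 - \tfrac{1}{10}(1-\tfrac{1}{4}) - \tfrac{1}{4} = \tfrac{29}{160} > 0$, so the amplification step still goes through with only $\Theta(1)$ repetitions. Everything else is routine: the permutation $P$ is applied locally by Alice and Bob to their shares, so it costs no communication, and the $\scinv$ promise ``$M$ and $PM$ invertible'' is automatic from the invertibility of $B$ and the faithfulness of $\rho$.
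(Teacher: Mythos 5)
Your proof is correct, but it takes a genuinely different route from the paper's. The paper stays over $\F_2$: it augments the rank instance to $A_2$ exactly as in the $p\geq 3$ case, but then re-randomizes, forming $B=G_1A_2G_2$ with $G_1,G_2$ uniform over invertible matrices, so that conditioned on its rank $B$ is uniformly distributed; the protocol's behaviour on promise-violating (singular) inputs then enters both acceptance probabilities through a single locally computable constant $p_0$, which cancels in the difference and leaves a constant gap ($17/80$) even though $1-1/p=1/2$ is too small for the original calculation. You instead move to the quadratic extension: you transfer the $\scrank$ lower bound from $\F_2$ to $\F_4$ via the subfield embedding (rank is invariant under field extension, as you argue), rerun the $p\geq 3$ gap argument at $q=4$ (your recomputed gaps $79/320$ and $29/160$ are correct), and simulate one $\scinv$ query over $\F_4$ by two $\scinv$ queries over $\F_2$ through the regular representation $\F_4\hookrightarrow M_2(\F_2)$, reading off the two $\F_2$-coordinates of $(B^{-1})_{11}$ via the identity and a row swap applied locally to the additive shares; the promise is preserved since $\rho$ is a unital ring homomorphism, so both composite steps are sound. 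What each approach buys: yours avoids the delicate accounting of the protocol's behaviour on singular inputs (no $p_0$ bookkeeping, no claim that $B$ is uniform given its rank) and shows more generally that these lower bounds transfer between a field and its extensions, at the cost of a factor-$2$ blow-up in dimension and a doubling of the error parameter, both harmless since only $\Omega(n^2)$ with no $\log p$ factor is claimed at $p=2$; the paper's re-randomization keeps everything over $\F_2$ and is of a piece with the uniformizing-family machinery used elsewhere in the paper. Two small loose ends in your write-up, neither serious: odd $n$ follows by padding with a $1\times 1$ identity block, and whether $b$ appears as $(M^{-1})_{12}$ or $(M^{-1})_{21}$ depends on the basis convention, so either a row or a column swap works.
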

\begin{proof}
As before, we augment $A$ to $A_2$. Here we further randomize $A_2$ by multiplying a random invertible matrix on both sides of $A_2$, that is, we form $B=G_1 A_2 G_2$ where $G_1,G_2$ are uniform over $n\times n$ non-singular matrices over $\F_p$. It is clear that $\rk(B)=\rk(A_2)$, and $B$ is uniformly distributed over the $n\times n$ matrices with the same rank.

Run a protocol for $\scinv$ on $B$. Suppose that it outputs zero with probability $p_0$ when the input matrix has rank $n-1$. This probability can be calculated by Alice and Bob individually with no communication cost.
When $\rk(A) = n-1$, it outputs $1$ with probability 
$\alpha = \frac{1}{20}\big(1-\frac 1p\big) + \frac{p_0}{p} = \frac{1}{40} + \frac{p_0}2,$
while when $\rk(A)=n-2$ it outputs $1$ with probability
$\beta \geq \frac{19}{20}\big(1-\frac 1q\big)^2 + p_0\cdot \frac{2}{p}\big(1-\frac 1p\big) = \frac{19}{80}+\frac{p_0}{2}.$
Then,
$\alpha-\beta \geq \frac{17}{80}.$
The rest follows as in the proof for $p\geq 3$. 
\end{proof}

Now we reduce \scinv\ to \scsls$_b$.
\begin{theorem} $\R^{\pub}_{1/20}(\scsls_b)=\Omega(n^2 \log p)$ for $b\neq \mathbf 0$,
\end{theorem}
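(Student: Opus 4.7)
The plan is to give a zero-communication reduction from $\scinv$ to $\scsls_b$ that preserves the error probability, so that the $\Omega(n^2\log p)$ lower bound for $\scinv$ transfers directly. The idea is that although the right-hand side $b$ is fixed, we can pre-process the input matrix by a fixed invertible change of basis that sends $b$ to the first standard basis vector $e_1$, so that asking about the first coordinate of the solution becomes exactly asking about the $(1,1)$ entry of the inverse.

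More precisely, since $b\neq \mathbf 0$ and $b$ is a fixed parameter of the problem known to both players, Alice and Bob can (without any communication) agree on a fixed invertible matrix $M\in M_n(\F_p)$ with $Mb=e_1$ (such an $M$ exists because any nonzero vector in $\F_p^n$ can be completed to a basis). Given an input $(x,y)$ to $\scinv$, Alice locally forms $x'=M^{-1}x$ and Bob locally forms $y'=M^{-1}y$; the sum $x'+y'=M^{-1}(x+y)$ is invertible since $M$ and $x+y$ are. They then run the assumed $\scsls_b$ protocol on $(x',y')$.

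The consistency check is a one-line calculation: the solution $t$ of $(x'+y')t=b$ satisfies $M^{-1}(x+y)t=b$, hence $(x+y)t=Mb=e_1$, so $t=(x+y)^{-1}e_1$ and in particular
\[
t_1 = \bigl((x+y)^{-1}\bigr)_{11}.
\]
Thus $t_1=0$ if and only if the $(1,1)$-entry of $(x+y)^{-1}$ is zero, i.e.\ the output of $\scsls_b$ on $(x',y')$ is exactly the output of $\scinv$ on $(x,y)$. Because the reduction uses no communication and incurs no error blow-up, any $\delta$-error protocol for $\scsls_b$ gives a $\delta$-error protocol for $\scinv$ of the same cost, and so $\R^{\pub}_{1/20}(\scsls_b)\ge \R^{\pub}_{1/20}(\scinv)=\Omega(n^2\log p)$.

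There is essentially no hard step here: the only thing to verify carefully is that the transformation $M$ is publicly determined by the problem instance (hence requires no communication) and that the invertibility promise is preserved by multiplication by $M^{-1}$, both of which are immediate. The lower bound then follows immediately from the previously established bound on $\scinv$ for any prime $p$.
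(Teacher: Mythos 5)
Your proposal is correct and is essentially the paper's own argument: the paper likewise fixes an invertible $Q$ with $Qb=e_1$, has Alice and Bob locally form $(Q^{-1}x,Q^{-1}y)$, and observes that the solution of $(Q^{-1}(x+y))t=b$ is $t=(x+y)^{-1}e_1$, so $t_1=((x+y)^{-1})_{11}$. The zero-communication, error-preserving reduction from $\scinv$ is exactly the same, so nothing further is needed.
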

\begin{proof}
    We prove it by a reduction from $\scinv$.

    Take an instance $(x,y)$ from $\scinv$. Since $b\neq 0$, there exists an invertible matrix $Q$ such that $Qb=(1,0,0,\cdots,0)\trans$. Alice and Bob agree on the same $Q$, e.g. the minimal $Q$ in alphabetical order. Then they run the protocol of \scsls\ on input $(Q^{-1}x,Q^{-1}y,b)$.  Then $
t=(Q^{-1}x+Q^{-1}y)^{-1}b=(x+y)^{-1}QQ^{-1}(1,0,\cdots,0)\trans=(x+y)^{-1}(1,0,\cdots,0)\trans$
and thus $t_1=((x+y)^{-1})_{11}$.
\end{proof}

\paragraph{Acknowledgements}
We would like to thank Amit Chakrabarti for reading and helping with an earlier draft of this paper, and Troy Lee for giving us useful suggestions. David Woodruff would also like to acknowledge the XDATA program of the Defense Advanced Research Projects Agency (DARPA), administered through Air Force Research Laboratory contract FA8750-12-C0323, for support for this project. 
\bibliographystyle{plain}
\bibliography{matrixinv}

\newpage
\appendix
\section{Proof of Lemma~\ref{lem:eigenvalue_and_fourier}}\label{sec:eigenvalue_and_fourier}
\begin{proof}
    Let $\omega=e^{2\pi/p}$, $U(x,y)=p^{-N/2}\cdot \omega^{-\langle x,y\rangle}$, and $\Lambda = \diag(|\hat{g}|^2)$. We will prove $F^\dag F=p^N U \Lambda U^\dag$.
    \begin{eqnarray*}
            (F^\dag F)_{x,z}
        &=& \sum_y (F_{y,x})^* F_{y,z} \\
        &=& \sum_y g(y+x)^* g(y+z) \\
        &=& \sum_y \sum_s \omega^{-\langle s,y+x\rangle} \hat{g}(s)^* \sum_t \omega^{\langle t,y+z\rangle} \hat{g}(t) \\
        &=& \sum_s \sum_t  \omega^{-\langle s,x\rangle+\langle t,z\rangle} \hat{g}(s)^* \hat{g}(t) \left( \sum_y \omega^{\langle -s+t,y\rangle} \right) \\
        &=& \sum_s \sum_t  \omega^{-\langle s,x\rangle+\langle t,z\rangle} \hat{g}(s)^* \hat{g}(t) \cdot p^N\delta_{s,t} \\
        &=& p^N \cdot \sum_s \omega^{-\langle s,x\rangle+\langle s,z\rangle} \hat{g}(s)^* \hat{g}(s) \\
        &=& p^N \cdot \sum_s \omega^{-\langle s,x\rangle} |\hat{g}(s)|^2 \omega^{\langle s,z\rangle}  \\
        &=& p^{2N} \cdot \sum_s U_{x,s} |\hat{g}(s)|^2 (U_{z,s})^* \\
        &=& p^{2N}(U\Lambda U^\dag)_{x,z}
    \end{eqnarray*}
    $U$ is unitary, because
    \begin{eqnarray*}
            (U^\dag U)_{x,z}
        &=& \sum_y (U_{y,x})^* U_{y,z} \\
        &=& \sum_y p^{-N/2} \omega^{\langle y,x\rangle} \cdot p^{-N/2} \omega^{-\langle y,z\rangle} \\
        &=& p^{-N}\cdot \sum_y \omega^{\langle y,x-z\rangle} \\
        &=& p^{-N}\cdot p^N \delta_{x,z} \\
        &=& \delta_{x,z}
    \end{eqnarray*}
Therefore, the singular values of $F$ are $p^N\cdot |\hat{g}|$.
\end{proof}

\section{Proofs for the $\scrank_{n,n-1}$ Problem}
\label{app:proofs}

We shall use the following fact.
\begin{fact}[\cite{morrison2006integer}]
	The number $n\times n$ matrices over $\mathbb{F}_p$ of rank-$r$ is
\[	\frac{(p^n-1)(p^n-p)\cdots(p^n-p^{r-1})}{(p^r-1)(p^r-p)\cdots(p^r-p^{r-1})} \prod_{k=0}^{r-1}(p^n-p^k). \]
\end{fact}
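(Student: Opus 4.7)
The plan is a standard two-step counting argument: first count the number of $r$-dimensional subspaces of $\mathbb{F}_p^n$ to serve as the column space of the matrix, then count the number of $n \times n$ matrices whose column space is a fixed such subspace, and finally multiply. The displayed formula splits naturally into these two factors, which is the structural clue.

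For the first step, I would count ordered linearly independent $r$-tuples in $\mathbb{F}_p^n$: the $k$-th vector must lie outside the $p^k$-element subspace spanned by the preceding $k$ vectors, yielding $\prod_{k=0}^{r-1}(p^n - p^k)$ ordered tuples. Dividing by the analogous count $\prod_{k=0}^{r-1}(p^r - p^k)$ of ordered bases of any fixed $r$-dimensional subspace produces the number of $r$-dimensional subspaces of $\mathbb{F}_p^n$ (the Gaussian binomial coefficient), which is exactly the first (fractional) factor in the claimed formula.

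For the second step, once an $r$-dimensional column space $V$ is fixed, I would identify $V$ with $\mathbb{F}_p^r$ by choosing a basis. The $n \times n$ matrices with column space exactly $V$ are then in bijection with $r \times n$ matrices of full row rank $r$ over $\mathbb{F}_p$ (each column of the original matrix is expressed in the chosen basis of $V$, and spanning $V$ is equivalent to the resulting $r \times n$ matrix having rank $r$). Counting such matrices by selecting the $r$ rows one at a time, each row must extend the previous ones to a linearly independent set in $\mathbb{F}_p^n$, which again gives $\prod_{k=0}^{r-1}(p^n - p^k)$ choices.

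Multiplying the two counts, and noting that each rank-$r$ matrix is counted exactly once since its column space is uniquely determined, yields the displayed formula. There is no serious obstacle; this is a standard linear-algebraic combinatorial identity, and the only care required is bookkeeping to check that the product of the two factors matches the stated expression term-by-term, together with verifying that the identification $V \cong \mathbb{F}_p^r$ gives a genuine bijection independent of the basis chosen (any two choices differ by an invertible $r \times r$ matrix acting on the left, and this action preserves the property of having full row rank).
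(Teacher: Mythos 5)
Your argument is correct. The paper does not prove this statement at all---it is quoted as a known fact with a citation to Morrison---so there is no internal proof to compare against; your two-step count (choose the $r$-dimensional column space, counted by the Gaussian binomial $\prod_{k=0}^{r-1}(p^n-p^k)\big/\prod_{k=0}^{r-1}(p^r-p^k)$, then count surjections onto it via full-row-rank $r\times n$ matrices, giving another $\prod_{k=0}^{r-1}(p^n-p^k)$) is the standard derivation and reproduces the displayed formula exactly.
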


The following lemma computes the Fourier coefficients and shows that the $\ell_1$-norm of $\hat{\theta}$ is small.
\begin{lemma} \label{lem:fouriercoef}
Let $r=\rankp(s)$, then
    \[\hat{\theta}(s)=(-1)^r p^{-n(n+1)/2} \prod_{k=1}^{n-r}(p^k-1).\]
Hence
    \[\|\hat{\theta}\|_1 = p^{-n} \prod_{k=1}^n (p^k-1) \prod_{k=0}^{n-1} \frac{1+p^k}{p^k}.\]
\end{lemma}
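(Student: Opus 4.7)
The plan is to exploit the bi-invariance of $\theta$ under the natural $GL_n\times GL_n$ action to reduce $\hat\theta(s)$ to a form depending only on $\rk(s)$, to evaluate the resulting character sum by a two-part induction over rectangular full-column-rank matrices, and finally to obtain $\|\hat\theta\|_1$ via the $q$-binomial theorem.

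For the invariance, write $\langle s,x\rangle=\tr(s^T x)$. For any $U,V\in GL_n(\F_p)$, the map $x\mapsto UxV$ is a bijection of $GL_n$ onto itself, and $\tr(s^T UxV)=\tr((U^T s V^T)^T x)$. Changing variables in the Fourier sum yields $\hat\theta(s)=\hat\theta(U^T s V^T)$, so $\hat\theta(s)$ depends only on $r=\rk(s)$. We may thus take $s=s_r:=\diag(1,\dots,1,0,\dots,0)$ with $r$ ones, whereupon $\langle s_r,x\rangle=\sum_{i=1}^r x_{ii}$. Introduce
\[
T_r^{(n,k)} := \sum_{\substack{X\in\F_p^{n\times k}\\ \rk X=k}} \omega^{-\sum_{i=1}^r X_{ii}},\qquad 0\le r\le k\le n,
\]
so that $p^{n^2}\hat\theta(s_r) = T_r^{(n,n)}$. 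Split $X=[X'\mid c]$ with $c\in\F_p^n$ the last column; for $\rk X = k$, we need $\rk X' = k-1$ and $c$ outside the column span $V$ of $X'$, a $(k-1)$-dimensional subspace. If $r\le k-1$ the diagonal weight does not involve $c$ and there are $p^n-p^{k-1}$ valid $c$'s, giving
\[
T_r^{(n,k)} = (p^n - p^{k-1})\,T_r^{(n,k-1)}.
\]
If $r=k$, the weight acquires the factor $\omega^{-c_k}$. Writing $\sum_{c\notin V}=\sum_{c\in\F_p^n}-\sum_{c\in V}$ and using $\sum_{c\in\F_p^n}\omega^{-c_k}=0$, the $c$-sum reduces to $-\sum_{c\in V}\omega^{-c_k}$, which vanishes unless the projection $V\to\F_p$ onto the $k$-th coordinate is identically zero, i.e.\ $V\subseteq\{c:c_k=0\}$, in which case it equals $-p^{k-1}$. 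The inclusion $V\subseteq\{c_k=0\}$ is equivalent to row $k$ of $X'$ being zero; removing that zero row leaves an $(n-1)\times(k-1)$ rank-$(k-1)$ matrix whose diagonal sum $\sum_{i=1}^{k-1}X'_{ii}$ is unchanged, hence
\[
T_k^{(n,k)} = -p^{k-1}\,T_{k-1}^{(n-1,k-1)}.
\]

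Iterating the first recursion from $k=n$ down to $k=r+1$ produces a factor $\prod_{j=r}^{n-1}(p^n-p^j)$, and then iterating the second $r$ times down to $T_0^{(n-r,0)}=1$ produces $(-1)^r p^{r(r-1)/2}$. Using $p^n-p^j=p^j(p^{n-j}-1)$ and collecting the $p$-exponents (which sum to $n(n-1)/2$), one obtains $T_r^{(n,n)}=(-1)^r p^{n(n-1)/2}\prod_{\ell=1}^{n-r}(p^\ell-1)$, giving the stated formula $\hat\theta(s)=(-1)^r p^{-n(n+1)/2}\prod_{\ell=1}^{n-r}(p^\ell-1)$.

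For $\|\hat\theta\|_1$, group $s$ by rank using the cited count of rank-$r$ matrices. After cancellation, the contribution from rank-$r$ matrices simplifies to $p^{-n(n+1)/2+r(r-1)/2}\binom{n}{r}_p\prod_{k=1}^n(p^k-1)$, where $\binom{n}{r}_p$ is the Gaussian binomial. Summing over $r$ and applying the $q$-binomial theorem $\sum_{r=0}^n p^{r(r-1)/2}\binom{n}{r}_p=\prod_{k=0}^{n-1}(1+p^k)$ then yields the stated expression after rewriting $p^{-n(n+1)/2}=p^{-n}/\prod_{k=0}^{n-1}p^k$. I expect the main obstacle to be recursion (b): tracking the character cancellation to isolate the residual subspace condition $V\subseteq\{c_k=0\}$, translating it into the zero-row condition on $X'$, and verifying that the diagonal indexing of the reduced matrix still matches after deleting row $k$; the remainder is bookkeeping of $p$-exponents and one standard $q$-series identity.
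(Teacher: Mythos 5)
Your proposal is correct, and the two halves compare differently with the paper. For the closed form of $\hat{\theta}(s)$ the paper does not give a proof at all: it simply quotes the expression from Sun--Wang \cite{sun2012randomized}. You instead derive it from scratch, first using the $GL_n\times GL_n$ bi-invariance $\hat\theta(s)=\hat\theta(U^TsV^T)$ to reduce to $s_r=\diag(1,\dots,1,0,\dots,0)$, and then evaluating the character sum $T_r^{(n,n)}=\sum_{x\in GL_n}\omega^{-\sum_{i\le r}x_{ii}}$ by peeling off columns: the recursion $T_r^{(n,k)}=(p^n-p^{k-1})T_r^{(n,k-1)}$ for $r\le k-1$ and $T_k^{(n,k)}=-p^{k-1}T_{k-1}^{(n-1,k-1)}$ (the latter via $\sum_{c\notin V}\omega^{-c_k}=-\sum_{c\in V}\omega^{-c_k}$, which survives exactly when row $k$ of $X'$ vanishes) are both sound, and the exponent bookkeeping $r(r-1)/2+(n-1+r)(n-r)/2=n(n-1)/2$ indeed yields $\hat\theta(s_r)=(-1)^rp^{-n(n+1)/2}\prod_{\ell=1}^{n-r}(p^\ell-1)$; spot checks at $r=0$ (where $T_r^{(n,n)}=|GL_n(\F_p)|$) and at $n=1$ confirm it. For $\|\hat\theta\|_1$ your computation is essentially identical to the paper's: group by rank, use the count of rank-$r$ matrices, and apply the $q$-binomial identity $\sum_r p^{r(r-1)/2}\binom{n}{r}_p=\prod_{k=0}^{n-1}(1+p^k)$. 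The net effect is that your argument is self-contained where the paper leans on an external reference, at the cost of the extra recursion; this is a legitimate and arguably preferable route for a reader who does not want to chase the citation.
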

\begin{proof}
The expression of $\hat\theta(s)$ is from \cite{sun2012randomized}. It follows straightforwardly that 
\begin{align*}
 \|\hat{\theta}\|_1
&= \sum_{r=0}^n \sum_{s:\rankp(s)=r} |\hat{\theta}(s)| \\
&= \sum_{r=0}^n p^{r(r-1)/2} \binom{n}{r}_{\!\!p}\
 \prod_{k=n-r+1}^n (p^k-1) \cdot p^{-n(n+1)/2} \prod_{k=1}^{n-r}(p^k-1) \\
&= p^{-n(n+1)/2} \prod_{k=1}^n (p^k-1) \sum_{r=0}^n p^{r(r-1)/2} \binom{n}{r}_{\!\!p} \\
&= p^{-n(n+1)/2} \prod_{k=1}^n (p^k-1) \prod_{k=0}^{n-1} (1+p^k) \\
&= p^{-n} \prod_{k=1}^n (p^k-1) \prod_{k=0}^{n-1} \frac{1+p^k}{p^k} 
\end{align*}
\end{proof}

We are now ready to complete the proof of Theorem~\ref{thm:rank_lb}.

\begin{proof}
We shall show that $\|\hat{g}\|_1^\epsilon$ is large. By Lemma~\ref{lem:dual_and_approx} and Fact~\ref{lem:dual_fourier_l1},
$$ \|\hat{g}\|_1^\epsilon \geq \frac{\langle g, \psi \cdot \dom(g) \rangle - \|\psi \cdot \overline{\dom(g)} \|_1 - \epsilon \|\psi\|_1}{p^{n^2}\|\hat{\psi}\|_\infty}.$$
Choosing $\psi=(-1)^n \hat{\theta}$ yields that
\begin{align*}
&\quad\  \langle g,\psi \cdot \dom(g) \rangle \\
&= \sum_{x:\rankp(x)=n} (-1)^n\hat{\theta}(x) - \sum_{x:\rankp(x)=n-1} (-1)^n\hat{\theta}(x) \\
&= \sum_{x:\rankp(x)=n} |\hat{\theta}(x)| + \sum_{x:\rankp(x)=n-1} |\hat{\theta}(x)| \\
&= p^{n(n-1)/2} \prod_{k=1}^{n}(p^k-1) \cdot p^{-n(n+1)/2} + \frac{p^n-1}{(p-1)^2} p^{(n-1)(n-2)/2} \prod_{k=1}^{n}(p^k-1) \cdot p^{-n(n+1)/2} (p-1) \\
&= \left( 1 + \frac{p-p^{-n+1}}{p-1} \right) p^{-n} \prod_{k=1}^{n}(p^k-1).
\end{align*}
Observe that
\begin{gather*}
\|\psi \cdot \dom(g) \|_1
= \sum_{x:\rankp(x)=n} |\hat{\theta}(x)| + \sum_{x:\rankp(x)=n-1} |\hat{\theta}(x)|
= \langle g, \psi \cdot \dom(g) \rangle\\
     \|\psi \cdot \overline{\dom(g)} \|_1
 = \|\psi\|_1 - \|\psi \cdot \dom(g)\|_1 
 = \|\hat{\theta}\|_1 - \|\psi\cdot \dom(g)\|_1 \\
 \|\psi\|_1 = \|\hat\theta\|_1\\
 \|\hat\psi\|_\infty = \|\hat{\hat\theta}\|_\infty =  p^{-n^2}.
\end{gather*}
The lower bound for $\|\hat{g}\|_1^\epsilon$ follows as below.
\begin{align*}
      \|\hat{g}\|_1^\epsilon 
&\geq \frac{\langle g, \psi \cdot \dom(g) \rangle - \|\psi \cdot \overline{\dom(g)} \|_1 - \epsilon \|\psi\|_1}{p^{n^2}\|\hat{\psi}\|_\infty} \\
& =   \frac{\langle g, \psi \cdot \dom(g) \rangle - (\|\hat{\theta}\|_1 - \langle g, \psi \cdot \dom(g) \rangle)- \epsilon \|\hat{\theta}\|_1}{p^{n^2}\cdot p^{-n^2}} \\
& =   2 \langle g, \psi \cdot \dom(g) \rangle - (1+\epsilon)\|\hat{\theta}\|_1 \\
& =   2\cdot \left( 1 + \frac{p-p^{-n+1}}{p-1} \right) p^{-n} \prod_{k=1}^{n}(p^k-1) -(1+\epsilon)\cdot p^{-n} \prod_{k=1}^n (p^k-1)\prod_{k=0}^{n-1} \frac{1+p^k}{p^k} \\
& =   \left(2\left( 1 + \frac{p-p^{-n+1}}{p-1} \right)-(1+\epsilon)\prod_{k=0}^{n-1} \frac{1+p^k}{p^k} \right)p^{-n} \prod_{k=1}^{n}(p^k-1) \\
\end{align*}
Note that
$$
       p^{-n} \prod_{k=1}^{n}(p^k-1) 
\geq p^{-n} \prod_{k=1}^{n}p^{k-1}
=    p^{n(n-3)/2},
$$
and when $\epsilon=1/4$,
\begin{gather*}
2\left( 1 + \frac{p-p^{-n+1}}{p-1} \right)-(1+\epsilon)\prod_{k=0}^{n-1} \frac{1+p^k}{p^k}
\geq 5.99-(1+\epsilon)\cdot 4.769
> 0.028,\qquad p=2,\ n\geq 10,
\\
2\left( 1 + \frac{p-p^{-n+1}}{p-1} \right)-(1+\epsilon)\prod_{k=0}^{n-1} \frac{1+p^k}{p^k}
\geq 4-(1+\epsilon)\cdot 3.13
> 0.08,\qquad p\geq 3.
\end{gather*}
\end{proof}

\section{Streaming}\label{sec:stream}
In all the linear algebra problems we have discussed, Alice and Bob want to know some property of the $n\times n$ matrix $x+y$, where Alice holds $x$ and Bob holds $y$. However, if we want to compute the matrix property in the streaming model, the matrix in the stream is represented in row order. Hence in the communication model, we need a different way to distribute the inputs: Alice holds the top half of the $n\times n$ matrix and Bob holds the bottom half. Formally, Alice holds an $(n/2)\times n$ matrix $x'$ and Bob holds another $(n/2)\times n$ matrix $y'$. They want to solve the problem on the concatenated matrix $\left(\begin{smallmatrix}
x' \\
y' \\
\end{smallmatrix}\right)$. In some circumstances, this is a more natural way to distribute the input. We shall show that all of our lower bounds still hold even in this setting.
\begin{theorem}
    Alice and Bob hold $(n/2)\times n$ matrices $x'$ and $y'$ over $\mathbb{F}_p$, respectively. They need $\Omega(n^2 \log p)$ qubits of communication even if they use a quantum protocol to compute \scrank$_{n,n-1}'$, $\scinv{'}$ or $\scsls'_b$.
\begin{itemize}
    \item \scrank$_{n,n-1}'$. Decide if $\rankp \left(\begin{smallmatrix}
x' \\
y' \\
\end{smallmatrix}\right)$ is $n$ or $n-1$;
    \item $\scinv'$. Decide if the top-left entry of $\left(\begin{smallmatrix}
x' \\
y' \\
\end{smallmatrix}\right)^{-1}$ is zero;
    \item \scsls$'_b$. $b\in \mathbb{F}_p^{n}$ is a non-zero vector. $t\in \mathbb{F}_p^n$ is the variable of the linear system $\left(\begin{smallmatrix}
x' \\
y' \\
\end{smallmatrix}\right)t=b$. They want to decide if $t_0$ is zero.
\end{itemize}
\end{theorem}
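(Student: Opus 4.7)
The plan is to derive each bound by a local, communication-free reduction from the additive split variant of the same problem in dimension $n/2$ to its concatenation variant in dimension $n$, then invoke Theorem~\ref{thm:rank-lb} (which, as the authors note, extends to quantum protocols) to conclude an $\Omega((n/2)^2 \log p) = \Omega(n^2 \log p)$ lower bound. Concretely, given an additive split instance $(x,y) \in M_{n/2}(\F_p) \times M_{n/2}(\F_p)$, Alice and Bob will each \emph{locally} build the top and bottom $(n/2)\times n$ halves of a single $n \times n$ block matrix $M$ whose block structure encodes $x+y$ in a way that faithfully preserves both the underlying promise and the target quantity.

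For $\scrank'_{n,n-1}$, I would take
\[
M = \begin{pmatrix} I_{n/2} & x \\ I_{n/2} & -y \end{pmatrix};
\]
subtracting the first block row from the second produces a block $-(x+y)$, so $\rankp(M) = n/2 + \rankp(x+y)$, and the additive-split gap $\{n/2, n/2-1\}$ lifts exactly to the concat gap $\{n, n-1\}$. For $\scinv'$ and $\scsls'_b$, I would switch to
\[
M = \begin{pmatrix} x & I_{n/2} \\ -y & I_{n/2} \end{pmatrix};
\]
the Schur-complement formula (valid since the bottom-right block is $I_{n/2}$) identifies the top-left $(n/2)\times(n/2)$ block of $M^{-1}$ with $(x+y)^{-1}$, and $\det M = \det(x+y)$, so both invertibility and the $(1,1)$-entry of the inverse are preserved verbatim. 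For the linear solve, I would extend the right-hand side to $b' = (b; \mathbf{0}) \in \F_p^{n}$, which is nonzero whenever $b$ is; writing out the block form of $Mt = b'$ with $t = (t^{(1)}; t^{(2)})$ gives $t^{(2)} = y t^{(1)}$ and $(x+y) t^{(1)} = b$, so the first coordinate $t_1$ matches $((x+y)^{-1} b)_1$.

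The main (and essentially only) subtlety is checking that these block constructions simultaneously preserve the promise (invertibility, or rank lying in the prescribed pair) and the feature being queried; the two $M$'s above handle all three problems uniformly. Since Alice's half of $M$ depends only on $x$ and Bob's only on $y$, the reduction introduces no communication overhead, so the quantum lower bound carries over immediately: any concat protocol of cost $C$ would yield an additive-split protocol of cost $C$ on dimension-$(n/2)$ inputs, forcing $C = \Omega(n^2 \log p)$ by the quantum extension of Theorem~\ref{thm:rank-lb}.
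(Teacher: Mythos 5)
Your reductions for $\scrank'_{n,n-1}$ and $\scinv'$ are correct and in the same spirit as the paper's: the paper passes from the additive-split problems to the concatenation ones via the local augmentation $x'=(x\ \ {-I})$, $y'=(y\ \ I)$, which is just a block rearrangement of your matrices, and your Schur-complement construction $M=\left(\begin{smallmatrix} x & I\\ -y & I\end{smallmatrix}\right)$ for $\scinv'$ is arguably cleaner than the paper's treatment (the paper reuses the rank augmentation and patches a parity issue by appending an extra row and column, which your all-even block format avoids). The dimension bookkeeping (additive instances of size $n/2$ giving concatenated instances of size $n$, losing only a constant factor) and the appeal to the quantum extension of Theorem~\ref{thm:rank-lb} are fine.

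The one genuine shortfall is in $\scsls'_b$. There $b\in\F_p^{n}$ is a \emph{parameter} of the problem, and the theorem (like its additive counterpart, proved for every $b\neq\mathbf 0$ via the change-of-basis matrix $Q$) asserts the bound for an arbitrary fixed non-zero $b$; you instead construct the parameter yourself, proving hardness only for right-hand sides of the special form $b'=(b;\mathbf 0)$. Your construction does not extend as stated: writing a general parameter as $b'=(b_1;b_2)$, your $M$ yields $(x+y)t^{(1)}=b_1-b_2$, so when $b_1=b_2\neq\mathbf 0$ the reduced instance is trivial ($t^{(1)}=0$, hence $t_1=0$ always). The fix is small: replace the lower-right identity block by a fixed invertible $R$, i.e.\ given a hard additive instance $(u,w)$, Alice holds $(u\ \ I)$ and Bob holds $(-Rw\ \ R)$; then $M$ is invertible iff $u+w$ is (Schur complement with respect to $R$), and $Mt=b'$ gives $(u+w)t^{(1)}=b_1-R^{-1}b_2$, so it suffices to choose $R$ with $c:=b_1-R^{-1}b_2\neq\mathbf 0$ (always possible for $b'\neq\mathbf 0$ once $n/2\ge 2$) and invoke the additive lower bound for $\scsls_{n/2,c}$ together with the fact that $t_1=((u+w)^{-1}c)_1$. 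Note that one cannot instead mimic the paper's additive reduction from $\scinv$ to $\scsls_b$ (left multiplication by $Q^{-1}$) in the concatenation model, since left row operations mix Alice's and Bob's rows and are not communication-free there; so an explicit block construction of the kind you give, patched as above to handle all non-zero $b$, is genuinely needed.
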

\begin{proof}
    We reduce the original problems to these problems.
    \begin{itemize}

    \item{\scrank$_{n,n-1}$.} Let $x'=\begin{pmatrix} x & -I \end{pmatrix}$, and $y'=\begin{pmatrix} y & I \end{pmatrix}$.
    $$
    \rankp(x+y) 
=   \rankp\begin{pmatrix}
    x+y & 0 \\
    y & I \\
    \end{pmatrix} - n
=   \rankp\begin{pmatrix}
    x & -I \\
    y & I \\
    \end{pmatrix} - n
=   \rankp\begin{pmatrix}
    x' \\
    y' \\
    \end{pmatrix} - n.
$$
    Therefore, $\rankp(x+y)=n$ iff $\rankp\left(\begin{smallmatrix}
x' \\
y' \\
\end{smallmatrix}\right) = 2n$, and $\rankp(x+y)=n-1$ iff $\rankp\left(\begin{smallmatrix}
x' \\
y' \\
\end{smallmatrix}\right) = 2n-1$.
    \item{$\scinv'$.} The reduction from \scrank$_{n,n-1}'$ almost works. The only problem is that the parity of the size of the matrix changes after appending one row and one column. To make the size even, we add another row on the bottom and another column on the right. The additional row and column are all zero except for the bottom-right entry, and the bottom-right entry is one. 
    \item{\scsls$'_b$.} The reduction from $\scinv'$ still works.\qedhere
    \end{itemize}
\end{proof}

Finally, this implies the streaming part of Theorem~\ref{thm:rank-lb}.

\section{Privacy}\label{sec:privacy}

We consider the $\textsc{rank}_{n,n-1}$ problem in this section. Let $\mu$ denote the uniform distribution over $G\times G$, where $G$ is the semi-group considered in Section~\ref{sec:multiparty}. Similarly to Theorem~\ref{thm:hardest_dist}, we have
\begin{theorem} Suppose that $\delta < 1/9$ when $p=2$ and $\delta < \frac{p}{3(1-p+p^2)}$ when $p\geq 3$. Then
\[
R_{1/3}^{\pub}(\textsc{Rank}_{n,n-1})\leq D_{\delta}^{\mu}(\textsc{rank}_{n,n-1})\leq C\left(\log \frac{1}{\delta}\right) \cdot R_{1/3}^{\pub}(\textsc{Rank}_{n,n-1}).
\]
\end{theorem}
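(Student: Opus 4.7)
The plan is to mirror Theorem~\ref{thm:hardest_dist}, using the uniformizing family $\mathcal{H}=\{h_{a,b}\}_{a\in G_1,\, b\in G}$ from Example~1 for $\scrank_{n,n-1}$, but applied to the uniform distribution $\mu$ on $G\times G$ in place of the sub-uniform distribution $\nu$ supported on $\dom(f)$. For the left inequality, start with a deterministic protocol $\Pi$ of error at most $\delta$ under $\mu$ and cost $D_\delta^\mu(\scrank_{n,n-1})$. Alice and Bob use public randomness to sample $h_{a,b}\in\mathcal{H}$ uniformly and run $\Pi$ on $h_{a,b}(g_1,g_2)=(a(g_1-b),\,a(g_2+b))$. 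By the uniformizing property, whenever $(g_1,g_2)\in\pre(G_i)$ the rerandomized pair is uniformly distributed on $\pre(G_i)$, so writing $w_i:=\mu(\pre(G_i))/\mu(\dom(f))$ for the conditional $\mu$-mass on rank class $i$, the worst-case error of the new public-coin protocol is at most $\delta/w_{\min}$, where $w_{\min}=\min\{w_1,w_2\}$.

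The key quantitative step is to compute $w_{\min}$ and verify that the stated thresholds on $\delta$ force $\delta/w_{\min}<1/3$. Writing $q_r$ for the fraction of $n\times n$ matrices over $\F_p$ of rank $r$, the counting formula recalled in Appendix~\ref{app:proofs} yields
\[
\frac{q_{n-1}}{q_n} \;=\; \frac{p^n-1}{(p-1)^2\, p^{n-1}} \;\xrightarrow[n\to\infty]{}\; \frac{p}{(p-1)^2}.
\]
For $p=2$ this ratio exceeds $1$, so rank-$n$ is the rarer class and $w_{\min}=q_n/(q_n+q_{n-1})$ decreases monotonically to $1/3$; the hypothesis $\delta<1/9$ therefore gives $\delta<w_{\min}/3$ for every $n$. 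For $p\geq 3$ the ratio is strictly less than $1$, so rank-$(n-1)$ is the rarer class, $w_{\min}=q_{n-1}/(q_n+q_{n-1})$ tends to $p/(p^2-p+1)$, and the hypothesis $\delta<p/(3(p^2-p+1))$ matches this limiting threshold.

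For the right inequality, I would use standard success amplification: take the optimal public-coin $1/3$-error protocol of cost $R_{1/3}^{\pub}(\scrank_{n,n-1})$, run it $\Theta(\log(1/\delta))$ times with independent public coins, and output the majority. A Chernoff bound drives the worst-case error below $\delta$ at a cost blow-up of $O(\log(1/\delta))$. Since a worst-case error bound holds in particular under $\mu$, fixing the public coins via the averaging argument yields a deterministic protocol of the same cost whose $\mu$-error is at most $\delta$.

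The main obstacle will be pinning down the threshold computation and its dependence on $n$: the weights $w_1, w_2$ depend on both $p$ and $n$, and one must confirm that the $n$-independent thresholds in the statement are compatible with the finite-$n$ values of $w_{\min}$. The monotonicity direction is favorable for $p=2$, where the limit $1/3$ is approached from above so $\delta<1/9$ suffices uniformly in $n$, but for $p\geq 3$ the stated threshold is the limiting value of $w_{\min}/3$, so applicability is really in the large-$n$ regime relevant to the privacy application.
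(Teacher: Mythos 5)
Your proposal is correct and follows essentially the same route as the paper: the paper likewise rerandomizes a $\mu$-distributional protocol through the uniformizing family of Example~1 and bounds the worst-case error by $\delta$ divided by the conditional mass of the rarer rank class, expressing this via the ratio $\alpha=\Pr\{\rk(A+B)=n\}/\Pr\{\rk(A+B)=n-1\}=\bigl(1+\tfrac{1}{p^n-1}\bigr)\tfrac{(p-1)^2}{p}$ (your $w_{\min}$ is exactly $\min\{\alpha,1\}/(1+\alpha)$), and it handles the right-hand inequality by amplification plus the easy direction of Yao's principle, just as you do. The finite-$n$ slack you flag for $p\ge 3$ is present in the paper as well, which treats $(1+\alpha)\delta<1/3$ as following from $\delta<p/(3(p^2-p+1))$ even though $\alpha$ exceeds its limiting value $(p-1)^2/p$ for every finite $n$.
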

\begin{proof}
One can verify that 
\[
\alpha := \frac{\Pr\{\rk(A+B)=n\}}{\Pr\{\rk(A+B)=n-1\}} = \left(1+\frac{1}{p^n-1}\right)\frac{(p-1)^2}{p}.
\] 
Note that $\alpha \approx 1/2 < 1$ when $p=2$ and $\alpha > 1$ when $p\geq 3$. Following the same reduction the same reduction in the proof of Theorem~\ref{thm:hardest_dist} with $\Pi_\nu$ replaced by $\Pi_\mu$, we conclude that the public coin protocol $\Pi'$ has error probability at most $1-\frac{(1+\alpha)(1-\delta)-1}{\alpha} < \frac 13$ when $p=2$ and at most $(1+\alpha)\delta < \frac 13$ when $p\geq 3$. The rest follows similarly as in Theorem~\ref{thm:hardest_dist}.
\end{proof}
As a corollary of \cite[Theorem 1.3]{BBCR10}, we know that when $p$ is a constant,
\[
\IC_\mu(\textsc{Singularity}_n)\geq \IC_\mu(\textsc{Rank}_{n,n-1}) =\Omega\left(\frac{D^\mu(\textsc{Rank}_{n,n-1})}{\polylog D^{\mu}(\textsc{Rank}_{n,n-1})}\right) = \Omega\left(\frac{n^2}{\polylog n}\right).
\]
We remark that combining \cite{JK10} and \cite{KLLRX12} yields $\max_\lambda \IC_\lambda(\textsc{Rank}_{n,n-1}) = \Omega(n^2)$, but it is not clear what distribution $\lambda$ attains the lower bound. Our bound above, although slightly weaker, shows that the product distribution nearly achieves the desired lower bound. Finally, it then follows from \cite[Proposition 20]{ACCFKP12} that
\[
\PRIV_\mu(\textsc{Singularity}_n) = \Omega\left(\frac{n^2}{\polylog n}\right).
\]

\end{document}